\documentclass[aps,prx,reprint,superscriptaddress,longbibliography]{revtex4-2}

\usepackage[T1]{fontenc}
\usepackage[utf8]{inputenc}
\usepackage{lmodern}
\usepackage{amsmath,amssymb,amsthm,mathtools,bm}
\usepackage{graphicx}
\usepackage{booktabs}
\usepackage{array}
\usepackage{multirow}
\usepackage{xcolor}
\usepackage{hyperref}
\usepackage{enumitem}
\usepackage{microtype}

\hypersetup{colorlinks=true,linkcolor=blue,citecolor=blue,urlcolor=blue}

\newtheorem{theorem}{Theorem}
\newtheorem{proposition}{Proposition}
\newtheorem{corollary}{Corollary}
\newtheorem{lemma}{Lemma}
\newtheorem{definition}{Definition}
\newtheorem{remark}{Remark}
\newtheorem{observation}{Observation}

\newcommand{\F}{\mathbb{F}}
\newcommand{\CC}{\mathrm{CC}}
\newcommand{\CSA}{\mathrm{CSA}}
\newcommand{\Var}{\mathrm{Var}}
\newcommand{\rank}{\mathrm{rank}}

\newcommand{\orb}{\mathrm{orb}}

\begin{document}

\title{Beyond Orbital Rotations: Correlation-Rank Limits and Clifford-Accessible Measurement, from Algebra and Global Optimization}

\author{Federico Zahariev}
\altaffiliation{Also at: Department of Chemistry and Ames National Laboratory, Iowa State University, Ames, Iowa 50011, USA}
\author{Vanda Glezakou}
\affiliation{Chemical Sciences Division, Oak Ridge National Laboratory, Oak Ridge, Tennessee 37830, USA}

\date{July 18, 2026}

\begin{abstract}
Algebra and RANGE global optimization play complementary, explicitly separated roles in identifying measurement structure beyond orbital rotations.  In the fixed $(1,1)$-particle sector of two spatial orbitals per spin, algebra proves that one particle-number-preserving orbital-rotation context contributes a rank-one two-body correlation block $T$.  An observable therefore needs at least $\operatorname{rank}T$ such contexts, and its best $K$-context correlation-block approximation is exactly the Eckart--Young singular-value tail, attained by the truncated singular-value decomposition.  A continuous RANGE search over the physical rotation angles independently corroborates the physical-angle realization of this exact trade-off, verifying the parameterization without replacing the proof.

A Bell-diagonal, Heisenberg-type witness has correlation rank three: at least three orbital-rotation contexts are required, whereas one explicit physical Clifford circuit measures its commuting Pauli representatives.  For spin-conserving Jordan--Wigner molecular Hamiltonians, we also prove the parity ceiling $r_X\leq 2(N{-}1)$ for any Pauli subset, tight even within commuting subsets.  Here $X$-rank is a routing diagnostic and algebraic ceiling, not by itself a Gaussian-exclusion criterion; the strict separation is carried by the correlation-rank theorem.  The discrete mode of RANGE, a robust adaptive nature-inspired global optimizer, then addresses the combinatorial problem for which no closed form is available: locating high-$X$-rank commuting families across molecular and production $f$-element Hamiltonians.  It finds ceiling-saturating witnesses for CH$_4$ and NdO and high-rank families elsewhere; values are reported as best found unless a proved ceiling is attained.  This division of labor is the ``from Algebra and Global Optimization'' named in the title: algebra proves the obstruction, trade-off, and attainment, while RANGE independently validates the continuous realization and discovers extremal discrete structure.

To translate structure into sampling cost, we apply the companion certificate framework.  Over declared capped dictionaries, enlarging product settings by fully commuting, Clifford-accessible settings reduces the certified leading shot cost by $31$--$70\%$ on four 29--35-qubit $f$-element Hamiltonians.  This is a QWC-versus-QWC+FC result, not a production-scale Gaussian-versus-Clifford pricing.  Finally, controlled-Pauli insertions in Hadamard tests are Clifford.  These zero-$T$ statements concern measurement-specific circuitry only; shot counts and state preparation retain their full costs.
\end{abstract}

\maketitle

\section{Introduction}

\emph{The problem, in chemical terms.}  A quantum computer returns a single measurement outcome per execution, so molecular energies, transition dipoles, and nonadiabatic couplings are recovered only after repeated state preparation and measurement.  The resulting cost depends both on how the operator's Pauli terms are combined into jointly readable settings and on the circuit used to rotate each setting into a measurement basis.  A prominent chemistry-native construction uses particle-number-preserving orbital rotations.  These basis changes are natural and symmetry-aware, but they impose a restricted ansatz on the observable fragments available in one context.  We ask a structural question before a cost question: \emph{which observables require several orbital-rotation contexts even though one Clifford context suffices?}

\emph{Where global optimization enters.}  Algebra and search have different jobs.  The algebra proves the correlation-rank lower bound, derives the exact approximation curve, and constructs an optimizer through the truncated singular-value decomposition.  A continuous RANGE search~\cite{RANGE} over physical rotation angles then corroborates the implementation numerically; it is a cross-check, not a substitute for proof.  A discrete version addresses the genuinely combinatorial task for which no closed form is available: locating high-$X$-rank commuting subsets of molecular Hamiltonians.  These values are reported as best found unless they attain a proved ceiling.

Estimating molecular observables is a central subroutine in quantum algorithms for chemistry~\cite{Peruzzo2014,McClean2016}. Whether one targets ground-state energies, excited-state properties, non-adiabatic couplings, or response quantities~\cite{Mitarai2020}, the practical workflow is the same: decompose the relevant operator into Pauli terms, organize those terms into simultaneously measurable groups~\cite{Izmaylov2019,Verteletskyi2020}, implement a measurement circuit for each group, and allocate shots across the resulting settings. In the fault-tolerant regime, the cost of this measurement layer is determined not only by the number of settings and the associated sampling variance, but also by the non-Clifford resources required to realize the corresponding basis changes.

In quantum chemistry, the dominant structured measurement paradigm is based on particle-number-preserving fermionic Gaussian unitaries. These are the same transformations as orbital rotations (the number-conserving mixing of orbitals familiar from electronic-structure theory), and we use the two terms interchangeably. This framework, exemplified by Cartan-subalgebra-based constructions~\cite{Yen2021,Choi2023}, is attractive because it is symmetry-aware and chemically motivated. However, it also imposes a strong ansatz on what counts as an admissible measurement fragment. At the qubit level, by contrast, simultaneous measurability is naturally a graph problem on commuting Pauli operators, and the relevant unitary structure is the Clifford group. This raises a basic structural question: does restricting measurement design to orbital rotations already capture the efficient fault-tolerant measurement strategies, or does a genuinely different, Clifford-accessible measurement layer exist beyond their reach?

The analysis has three layers.  First, every commuting Pauli family is simultaneously diagonalizable by a Clifford circuit.  Second, the $X$-support of a spin-conserving Jordan--Wigner Hamiltonian obeys a tight parity ceiling, $r_X\leq2(N{-}1)$; this is a diagnostic of the available Pauli structure, not an exclusion test for orbital rotations.  The strict single-context separation comes instead from correlation rank in a fixed $(1,1)$-particle sector: a Bell-diagonal family needs at least three orbital-rotation contexts but one Clifford context.  Third, for off-diagonal matrix elements $\langle\Phi_1|P|\Phi_2\rangle$, the controlled application of a Pauli string is itself Clifford.  The resulting zero-$T$ statements are confined to the measurement-specific basis change or controlled-Pauli insertion; they do not remove sampling or state-preparation costs.

Our main contributions are as follows.
\begin{enumerate}[label=(\roman*),leftmargin=2.2em]
    \item \textbf{Clifford accessibility.}  Every mutually commuting Pauli family is simultaneously diagonalizable by a Clifford circuit~\cite{YenVerteletskyi2020}; its basis-change circuit therefore has zero $T$ gates.
    \item \textbf{A tight parity ceiling.}  Under the Jordan--Wigner mapping, every Pauli subset of a spin-conserving molecular Hamiltonian satisfies $r_X\leq2(N{-}1)$.  CH$_4$/STO-3G attains the ceiling with a commuting subset.  This result is diagnostic, not a Gaussian-exclusion theorem.
    \item \textbf{A strict correlation-rank separation.}  In the fixed $(1,1)$ sector, one orbital-rotation context has a rank-one two-body correlation block.  The resulting context lower bound and exact Eckart--Young trade-off are attained, and a Bell/Heisenberg witness separates one Clifford context from at least three orbital contexts.  To our knowledge, this is the first exact context-complexity obstruction separating particle-number-preserving orbital-rotation measurement from Clifford-accessible commuting measurement in a fixed fermionic sector.
    \item \textbf{A physical Clifford witness.}  We give commuting four-qubit Pauli representatives of the Bell family and an explicit Clifford circuit that maps them to $Z$ strings on the physical qubits.
    \item \textbf{Transition observables.}  The controlled application of any Pauli string is Clifford, so the controlled-Pauli insertion in a Hadamard test adds no measurement $T$ gates.  Sampling and controlled state preparation remain outside this statement.
    \item \textbf{Certified cost application.}  The companion conic certificate prices the declared enlargement QWC$\to$QWC+FC on identical Hamiltonians and covariance models, attaching a dual witness to every reported value.
    \item \textbf{Search-supported structural evidence.}  Continuous search corroborates the exact rotation-angle trade-off, while discrete search locates best-found high-$X$-rank commuting families from small molecules to 36-qubit $f$-element operators.  Maximality is claimed only when a proved ceiling is attained.
\end{enumerate}

The Clifford and orbital-rotation context classes are therefore incomparable at the single-context level.  The Bell witness gives a directional separation in favor of Clifford measurement, whereas a generic continuously rotated one-qubit axis is available in one orbital context but not one Clifford-Pauli context.  Their union can only improve the certified optimum.  The resource conclusion is deliberately narrower than ``measurement is free'': commuting-Pauli basis changes and controlled-Pauli insertions are Clifford, while sampling, routing, noise, and state preparation retain their costs.

\section{Preliminaries}

\subsection{Pauli operators and symplectic representation}

The binary symplectic representation of Pauli operators, developed in the stabilizer formalism~\cite{Gottesman1997,AaronsonGottesman2004}, provides the algebraic language for our analysis. We summarize the essential elements here.

The $n$-qubit Pauli group $\mathcal{P}_n$ consists of tensor products of $I$, $X$, $Y$, and $Z$ up to global phases. Ignoring phases, each Pauli operator $P$ is represented by a binary symplectic vector
\begin{equation}
    v(P) = (x\mid z) \in \F_2^{2n},
\end{equation}
with the standard encoding
\begin{equation}
    I \mapsto (0\mid 0),\quad X \mapsto (1\mid 0),\quad Y \mapsto (1\mid 1),\quad Z \mapsto (0\mid 1).
\end{equation}
The symplectic inner product on $\F_2^{2n}$ is
\begin{equation}
    \omega(u,v)=x_u\cdot z_v+z_u\cdot x_v \pmod 2.
\end{equation}
Two Paulis commute if and only if $\omega(v(P),v(Q))=0$.

\begin{definition}[Isotropic subspace]
A subspace $W\subseteq \F_2^{2n}$ is isotropic if $\omega(u,v)=0$ for all $u,v\in W$.
\end{definition}

A family of Pauli operators is mutually commuting if and only if their symplectic vectors span an isotropic subspace.

\subsection{Clifford circuits}

The $n$-qubit Clifford group $\mathcal{C}_n$ is the normalizer of the Pauli group in $U(2^n)$,
\begin{equation}
    \mathcal{C}_n = \{U\in U(2^n): U\mathcal{P}_n U^\dagger = \mathcal{P}_n\}.
\end{equation}
It is generated by $H$, $S$, and CNOT gates. Each Clifford induces a symplectic map on $\F_2^{2n}$ preserving $\omega$.

\subsection{Particle-number-preserving Gaussian unitaries}

We use particle-number-preserving fermionic Gaussian unitaries in the sense relevant for CSA measurement. These are generated by quadratic number-conserving fermionic Hamiltonians and act as orbital rotations,
\begin{equation}
    U_G a_p^\dagger U_G^\dagger = \sum_q O_{pq} a_q^\dagger,
\end{equation}
with $O\in U(N_{\orb})$ within each spin sector. Under Jordan--Wigner, elementary Givens rotations become non-Clifford basis changes involving continuous angles. Consequently, synthesized Gaussian measurement circuits generally require nonzero $T$ count at fault-tolerant precision~\cite{RossSelinger2016}.

\begin{remark}
The $X$-rank analysis below focuses on this number-preserving Gaussian class. Whether analogous bounds or exclusion criteria hold for more general number-nonconserving Gaussian unitaries is an interesting open problem.
\end{remark}

\subsection{Measurement cost model}

Let $H=\sum_i c_i P_i$ be a Pauli Hamiltonian partitioned into commuting groups $G_1,\dots,G_K$. With optimal shot allocation~\cite{Crawford2021}, the number of shots required to estimate $\langle H\rangle$ to additive precision $\epsilon$ is
\begin{equation}
    M(\epsilon)=\frac{1}{\epsilon^2}\left(\sum_{k=1}^{K}\sqrt{\Var_k}\right)^2,
    \label{eq:shots}
\end{equation}
where
\begin{equation}
    \Var_k = \left\langle \left(\sum_{i\in G_k} c_i P_i\right)^2 \right\rangle - \left\langle \sum_{i\in G_k} c_i P_i \right\rangle^2.
\end{equation}
If $T_{\mathrm{prep}}$ denotes the $T$-gate count of state preparation and $T_{\mathrm{meas}}^{(k)}$ the $T$-gate count of the measurement basis change for group $k$, then the total non-Clifford cost per energy estimate is
\begin{equation}
    C_T = \sum_{k=1}^{K} M_k\big(T_{\mathrm{prep}} + T_{\mathrm{meas}}^{(k)}\big).
    \label{eq:ct-general}
\end{equation}
Under optimal allocation, Eq.~\eqref{eq:ct-general} reduces to the two-strategy comparison developed in Sec.~\ref{sec:cost}; the Gaussian and Clifford cost formulas there are simply the corresponding aggregated versions.

\section{Clifford Invariance of the Commutativity Graph}

Given a Pauli Hamiltonian $H=\sum_i c_i P_i$, its commutativity graph $G_C(H)$ has a vertex for each Pauli term and an edge between $P_i$ and $P_j$ whenever $[P_i,P_j]=0$.

\begin{proposition}[Clifford invariance of the commutativity graph]
\label{prop:clifford-invariance}
Let $H_1 = \sum_i c_i P_i$ and $H_2 = \sum_i c_i P_i'$, with $P_i' = C P_i C^\dagger$ for a qubit Clifford unitary $C$. Then $G_C(H_1)$ and $G_C(H_2)$ are isomorphic as weighted graphs, with coefficient multiset preserved.
\end{proposition}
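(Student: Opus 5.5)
The natural isomorphism is the index identity $\phi: P_i \mapsto P_i' = C P_i C^\dagger$. Since both Hamiltonians carry the same coefficient $c_i$ on the $i$th vertex, $\phi$ preserves vertex weights, and hence the coefficient multiset, by construction. The substance of the claim is therefore that $\phi$ is a bijection on vertices and that it preserves and reflects edges.

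\emph{Edges.} For any $i,j$,
\begin{equation}
[P_i',P_j'] = C P_i C^\dagger C P_j C^\dagger - C P_j C^\dagger C P_i C^\dagger = C[P_i,P_j]C^\dagger .
\end{equation}
Because $C$ is unitary, hence invertible, $[P_i',P_j']=0$ if and only if $[P_i,P_j]=0$. Equivalently, in the symplectic picture of the Preliminaries, $C$ induces a map $S_C$ on $\F_2^{2n}$ with $\omega(S_C u, S_C v)=\omega(u,v)$. Since commutation of Paulis is exactly vanishing of $\omega$, the same conclusion follows.

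\emph{Vertices.} Here I expect the only real point of care. The definition of the Clifford group guarantees that $P_i'$ is again a Pauli operator, but only up to a phase $\pm 1$ (and $\pm i$ is excluded for Hermitian images). So I will either regard vertices as Pauli strings with the sign absorbed into $c_i$, or note that signs do not affect commutation. The latter is enough for the graph statement, and if signs are absorbed the coefficient multiset changes only by signs, so I will state the convention explicitly. Distinctness must also hold: $P_i\neq P_j$ up to phase implies $P_i'\neq P_j'$ up to phase, since conjugation by $C$ is injective and $S_C$ is a linear bijection. Hence $\phi$ is a bijection between the term sets, and combined with the edge equivalence it is a weighted graph isomorphism.
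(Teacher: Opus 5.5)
Your proposal is correct and follows the paper's proof: the index map $P_i\mapsto CP_iC^\dagger$ is a vertex bijection, the identity $[P_i',P_j']=C[P_i,P_j]C^\dagger$ shows edges are both preserved and reflected, and the coefficients $c_i$ carry over unchanged. Your remarks on the $\pm1$ phase of $CP_iC^\dagger$ and on distinctness of the images are sound refinements that the paper leaves implicit; the argument does not need them.
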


\begin{proof}
The map $P_i\mapsto C P_i C^\dagger$ is a bijection on the vertices. Since Clifford conjugation preserves commutators,
\begin{equation}
    [P_i,P_j]=0 \iff [CP_iC^\dagger,CP_jC^\dagger]=0,
\end{equation}
it preserves edges. The coefficients $c_i$ are unchanged. Hence the graph isomorphism is weighted.
\end{proof}

Proposition~\ref{prop:clifford-invariance} isolates the exact invariance needed for measurement grouping: any clique cover of $G_C(H_1)$ is transported to a clique cover of $G_C(H_2)$ with the same number of groups. This is the clean theorem-level statement.

The proposition applies whenever two encoded Pauli lists are related by a Clifford conjugation, including qubit permutations.  Separately, in the benchmark data used here, the edge count, degree sequence, and sorted coefficient magnitudes agree across the tested Jordan--Wigner and Bravyi--Kitaev encodings and orbital orderings.  We report that agreement as an empirical check, not as a theorem for arbitrary fermion-to-qubit mappings.

\section{Clifford Measurement and the Orbital-Rotation Design Space}

\subsection{Every commuting Pauli family is Clifford-diagonalizable}

\begin{theorem}[Clifford simultaneous diagonalization]
\label{thm:clifford-diag}
Let $S=\{P_1,\dots,P_m\}$ be a mutually commuting set of $n$-qubit Pauli operators. Then there exists a Clifford circuit $U\in\mathcal{C}_n$ such that each $U P_i U^\dagger$ is a tensor product of $I$ and $Z$ operators. Consequently, every commuting Pauli family is measurable with a basis-change circuit containing only $H$, $S$, and CNOT gates.
\end{theorem}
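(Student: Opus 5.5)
The plan is to work in the binary symplectic picture of the Preliminaries and reduce the statement to a fact about isotropic subspaces of $\F_2^{2n}$. Let $W=\mathrm{span}\{v(P_1),\dots,v(P_m)\}$. Since the $P_i$ mutually commute, $W$ is isotropic, so $k:=\dim W\le n$. Choose a basis $w_1,\dots,w_k$ of $W$ from among the $v(P_i)$. It suffices to find a Clifford $U$ whose induced symplectic map $\Phi_U$ sends $W$ into the ``$Z$-subspace'' $L_Z=\{(0\mid z)\}$. Then every element of $W$, hence every $v(UP_iU^\dagger)=\Phi_U(v(P_i))$, has zero $x$-part. Up to a phase $\pm1$, this means $UP_iU^\dagger$ is a tensor product of $I$ and $Z$, and the phase is irrelevant for measurement.

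The key step is the symplectic-algebra fact that any isotropic subspace extends to a Lagrangian subspace, and that $\mathrm{Sp}(2n,\F_2)$ acts transitively on Lagrangians. I would prove it constructively rather than cite it abstractly, since the constructive route also yields the gate-level circuit.

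\textbf{Step 1.} Form the $k\times 2n$ matrix $[X\mid Z]$ whose rows are the $w_j$, and reduce the $X$-block. Row operations only change the generating set, so they are harmless. Column operations are realized by Clifford gates: CNOT adds a column of $X$ and simultaneously updates $Z$ consistently, and swaps permute qubits. Using these, bring $X$ to the form $[I_r\;0]$ with $r=\rank X$.

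\textbf{Step 2.} Apply Hadamards on qubits $r+1,\dots,n$ wherever rows lacking $X$-support still need a pivot. Alternatively, first process the rows with zero $X$-part by an analogous elimination on the $Z$-block followed by Hadamards. Either way the result is a generating matrix of the form $[I_k\;0\mid A\;B]$ after reordering.

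\textbf{Step 3.} Use isotropy. Writing the generators as $(I_k,0\mid A,B)$, commutation gives $A+A^T=0$, so $A$ is symmetric over $\F_2$. The off-diagonal entries $A_{ij}=A_{ji}$ are removed by CZ gates on qubits $i,j$, each of which adds $x_j$ into $z_i$ and $x_i$ into $z_j$. The diagonal entries $A_{ii}$ are removed by $S$ gates, which add $x_i$ into $z_i$. The $B$ block is removed by CNOTs controlled on qubits $1,\dots,k$ targeting $k+1,\dots,n$, or equivalently by row-reducing after the final step.

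\textbf{Step 4.} The generators are now $(I_k\,0\mid 0\,B')$. Hadamards on qubits $1,\dots,k$ swap $x\leftrightarrow z$ there, giving $(0\mid I_k\,B')$, which lies in $L_Z$.

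Composing all gates used gives $U$, built from $H$, $S$, CNOT and CZ$=(I\otimes H)\mathrm{CNOT}(I\otimes H)$. Hence $U$ has zero $T$ gates.

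I expect the main obstacle to be bookkeeping rather than conceptual. The hard part is showing that isotropy forces exactly the symmetric structure in Step 3, so that only CZ and $S$ gates are needed. This is precisely where commutativity is used; for a noncommuting set $A$ would fail to be symmetric and no such $U$ exists. I will also note that this construction is the one in \cite{YenVerteletskyi2020}. The measurement consequence then follows: measuring each qubit in the computational basis after $U$ jointly reads out all $UP_iU^\dagger$, and hence all $P_i$.
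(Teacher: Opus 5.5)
Your proof is correct and follows the paper's argument. The paper's main-text proof packages it as extending $W$ to a Lagrangian and mapping that to the $Z$-subspace by a symplectic transformation, citing that every binary symplectic map is an $H,S,\mathrm{CNOT}$ circuit. Appendix~\ref{app:symplectic} sketches the same row/column symplectic elimination you carry out. You make that elimination explicit and self-contained: isotropy forces $A$ to be symmetric, which is then cleared with $S$ and CZ.

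One slip needs fixing. In Step~3 the $B$ block is not removed by CNOTs controlled on qubits $1,\dots,k$ targeting $k+1,\dots,n$. Conjugation by $\mathrm{CNOT}_{c\to t}$ sends $x_t\leftarrow x_t+x_c$, which reintroduces $X$-support on qubit $t$. Nor can row operations after Step~4 remove it, since they preserve the row space. But $B$ need not be removed at all: $(0\mid I_k\;B)$ already has zero $X$-part, which is exactly what your Step~4 uses. If you do want it gone, use CZ$(c,t)$ before the Hadamards or $\mathrm{CNOT}_{t\to c}$ after them.

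Commutativity is also used in Step~2, not only in Step~3. The rows with zero $X$-part have no $Z$-support on the pivot qubits $1,\dots,r$, because they commute with the pivot rows. That is what lets you create the missing pivots among qubits $r+1,\dots,n$ without disturbing the existing ones.
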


This statement is standard in the stabilizer/symplectic formalism~\cite{Gottesman1997,AaronsonGottesman2004} and has been used for fully commuting measurement grouping~\cite{YenVerteletskyi2020}; we include a short proof because its direct measurement-cost consequence is central to the present argument.

\begin{proof}
The symplectic vectors of $S$ span an isotropic subspace $W\subseteq\F_2^{2n}$.  Extend a basis of $W$ to a Lagrangian subspace and choose a symplectic basis of the ambient space.  A symplectic transformation then maps that Lagrangian subspace to the standard $Z$ subspace $\operatorname{span}\{(0\mid e_j)\}_{j=1}^n$.  Every binary symplectic transformation is implemented, up to Pauli phases, by a Clifford circuit generated by $H$, $S$, and CNOT gates~\cite{Gottesman1997,AaronsonGottesman2004}.  The corresponding Clifford therefore maps every element of $W$, and hence every $P_i$, to a $Z$ string.  Appendix~\ref{app:symplectic} summarizes a constructive elimination.
\end{proof}

\begin{corollary}[Zero-$T$ Clifford measurement]
\label{cor:zeroT}
Every basis-change circuit produced by Theorem~\ref{thm:clifford-diag} has exactly zero $T$ gates per execution.  Sampling repetitions and state preparation are unaffected.  The controlled-Pauli insertion needed for transition matrix elements is treated separately in Appendix~\ref{app:cpauli}.
\end{corollary}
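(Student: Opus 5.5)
The corollary is essentially a bookkeeping consequence of Theorem~\ref{thm:clifford-diag}, so the plan is to unpack what that theorem produces and count gates. First, I will fix the gate set: the fault-tolerant model in which the $T$ count is defined takes $\{H,S,\mathrm{CNOT}\}$ together with Pauli gates as free Clifford operations, and $T$ as the only non-Clifford gate. The proof of Theorem~\ref{thm:clifford-diag} produces the diagonalizing unitary $U$ by realizing a binary symplectic transformation. I will observe that every elementary step of the constructive elimination in Appendix~\ref{app:symplectic} is one of $H$, $S$, CNOT, or SWAP (itself three CNOTs). The only remaining freedom is the Pauli phase correction that accompanies any lift of a symplectic map to $\mathcal{C}_n$. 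That correction is a Pauli operator, hence a product of $X$, $Z$ gates, and $Z=S^2$, $X=HS^2H$. It therefore adds no $T$ gates. Indeed, it can be dropped entirely, since it only flips the signs of the measured $Z$ strings, and those signs can be tracked classically in post-processing.

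Second, I will spell out the measurement protocol per execution. Apply $U$, measure all qubits in the computational basis, and evaluate each $UP_iU^\dagger$, which is $\pm$ a $Z$ string, as a parity of the bit outcomes. The readout and parity evaluation are classical and add no gates. The per-execution basis-change cost is therefore the $T$ count of a circuit word over $\{H,S,\mathrm{CNOT}\}$, which is zero by construction. There is no approximation step: the group is finite, so no Ross--Selinger-type synthesis is invoked, in contrast to Givens angles.

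Third, the two caveat sentences follow by inspecting Eq.~\eqref{eq:ct-general}. Setting $T_{\mathrm{meas}}^{(k)}=0$ for every group leaves $C_T=\sum_k M_k T_{\mathrm{prep}}$. The shot numbers $M_k$ from Eq.~\eqref{eq:shots} depend only on the grouping and the state variances, not on the circuit realizing $U$. $T_{\mathrm{prep}}$ is likewise untouched. The transition-element remark is simply a forward pointer and needs no proof here.

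The only step needing care is the first one. I must make sure the cited symplectic-to-Clifford correspondence really yields a circuit in the stated generators with no hidden non-Clifford phase gates. I would handle this by pointing to the explicit elimination in Appendix~\ref{app:symplectic}: it uses $H$ and $S$ on single qubits to clear $X$/$Y$ components column by column and CNOTs to clear cross-qubit support. That exhibits the word directly, instead of relying solely on the abstract lift.
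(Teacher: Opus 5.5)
Your proposal is correct and follows the paper's route. The corollary is read off directly from Theorem~\ref{thm:clifford-diag}, whose diagonalizer is an exact word in $H$, $S$, and CNOT (up to Pauli phases), so it contains zero $T$ gates. Your explicit handling of the Pauli-phase correction, and your use of Eq.~\eqref{eq:ct-general} for the sampling and state-preparation caveats, just spell out what the paper leaves implicit.
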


This corollary carries the resource content of the theorem. It does not require that the commuting family arise from a fermionic fragment, preserve particle number, or be generated by an orbital rotation.

\subsection{\texorpdfstring{$X$}{X}-rank as a parity diagnostic}

We now define a frame-dependent diagnostic of Pauli $X$-support.  It exposes the spin-sector parity structure of Jordan--Wigner Hamiltonians, but it is not, by itself, a criterion for Gaussian accessibility.

\begin{definition}[$X$-rank]
Let $S=\{P_1,\dots,P_m\}$ be a Pauli family with symplectic vectors $v(P_i)=(x_i\mid z_i)$. The $X$-rank of $S$ is
\begin{equation}
    r_X(S)=\rank_{\F_2}
    \begin{pmatrix}
        x_1\\
        \vdots\\
        x_m
    \end{pmatrix}.
\end{equation}
\end{definition}

\begin{definition}[Orbital-rotation-accessible commuting set]
A commuting Pauli set $S$ is orbital-rotation accessible if there exists a diagonal fermionic fragment $D$ in the occupation basis and a single particle-number-preserving fermionic Gaussian unitary $U_G$ such that the Pauli terms appearing in $U_G D U_G^\dagger$ contain $S$.
\end{definition}

The point of this definition is operational: CSA-style measurement assigns one orbital rotation per fragment. We ask whether a given commuting set can arise from such a fragment after one orbital rotation.

\begin{lemma}[Single-sector dimensional cap]
\label{lem:xrank-bound}
For any Pauli family supported on the $N$ qubits of one spin sector,
\begin{equation}
    r_X \le N.
\end{equation}
\end{lemma}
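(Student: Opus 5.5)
The claim is a dimension count, so the plan is simply to identify the ambient space in which the $X$-vectors live. Let the family be $S=\{P_1,\dots,P_m\}$, with every $P_i$ acting as the identity outside a fixed set $Q$ of $N$ qubits (the spin-orbitals of one spin sector under the chosen Jordan--Wigner ordering). I first record that, in the symplectic encoding $v(P_i)=(x_i\mid z_i)$, each coordinate $(x_i)_j$ with $j\notin Q$ is zero. This holds because the identity maps to $(0\mid 0)$, and being supported on $Q$ means acting as $I$ on every qubit outside $Q$.

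It follows that every row $x_i$ of the matrix in the definition of $r_X$ lies in the coordinate subspace
\begin{equation*}
V_Q=\{x\in\F_2^{n}: x_j=0 \text{ for } j\notin Q\}\cong\F_2^{N}.
\end{equation*}
The $\F_2$-rank of the stacked matrix is the dimension of its row space $\operatorname{span}_{\F_2}\{x_1,\dots,x_m\}\subseteq V_Q$. Therefore $r_X(S)\le\dim V_Q=N$.

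Equivalently, every nonzero entry lies in the $N$ columns indexed by $Q$, so the column rank is at most $N$. The argument is independent of $m$ and of whether the $P_i$ commute, which matches the statement's phrase ``any Pauli family.''

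I do not expect a real obstacle here. The only thing to handle with care is the phrase ``supported on the $N$ qubits of one spin sector.'' Under Jordan--Wigner with interleaved spin ordering, a same-spin hopping term picks up a $Z$-string across opposite-spin qubits, so it is not literally supported on that sector's qubits. The remedy is to read the hypothesis as restricting only the $X$-support: $x_i$ vanishes outside $Q$, while the $z$-part is unconstrained. Since $r_X$ depends only on the $x$-parts, the proof goes through verbatim under this weaker hypothesis, and I will state that explicitly. This is also the form needed later for the two-sector parity ceiling $2(N-1)$.
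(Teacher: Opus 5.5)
Your proof is correct and is the same column-dimension count the paper uses: its one-line proof notes that the sector $X$-block has only $N$ columns, so its $\F_2$-rank is at most $N$. Your refinement, reading the hypothesis as confining only the $X$-support to the sector (the interleaved Jordan--Wigner $Z$-strings can cross opposite-spin qubits), matches how the paper's ``sector $X$-block'' is actually used in the two-sector parity theorem.
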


\begin{proof}
The sector $X$-block has $N$ columns, so its binary rank cannot exceed $N$.
\end{proof}

\begin{remark}[Operational motivation]
The bound itself is only dimensional, but orbital rotations are compiled from pairwise Givens rotations whose Jordan--Wigner conjugation of occupation-diagonal fragments generates $X/Y$ support through the rotated orbital pairs and intervening $Z$ strings.  This motivates $X$-rank as a routing diagnostic; it is not a sharper bound or a Gaussian-exclusion proof.
\end{remark}

The cap is used only to normalize the diagnostic.  It does not distinguish orbital-rotation-accessible families from other families.  The nontrivial molecular restriction is the even-parity refinement below, and the strict Gaussian exclusion is supplied by the correlation-rank theorem in Sec.~\ref{sec:separation}.

\begin{theorem}[Two-sector parity ceiling]
\label{thm:two-sector}
Let $H$ be a spin-conserving $N$-spatial-orbital molecular electronic Hamiltonian in the Jordan--Wigner encoding with interleaved spin-orbital ordering.  For any subset $S$ of the Pauli terms of $H$,
\begin{equation}
  r_X(S) \leq 2(N-1).
  \label{eq:two-sector-bound}
\end{equation}
Moreover, the ceiling is tight even under the additional requirement that $S$ be mutually commuting: a molecular Hamiltonian exists for which such a subset attains $r_X=2(N{-}1)$.
\end{theorem}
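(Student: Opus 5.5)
The plan is to identify two independent even-parity linear constraints on the $X$-vectors of every Pauli term of $H$, one per spin sector, and then count dimensions.

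With interleaved ordering, qubits $2p$ and $2p{+}1$ carry spin orbitals $p\alpha$ and $p\beta$, so the $\alpha$-qubits and $\beta$-qubits each form a set of $N$ positions. A spin-conserving Hamiltonian is a linear combination of products $a^\dagger_{p\sigma}a_{q\sigma}$ and $a^\dagger_{p\sigma}a^\dagger_{q\tau}a_{r\tau}a_{s\sigma}$, in which creation and annihilation operators appear in matched pairs within each spin sector. Under Jordan--Wigner, $a^{(\dagger)}_j = Z_{<j}\,(X_j\mp iY_j)/2$. The $Z$ strings contribute nothing to the $X$-part of the symplectic vector. Each ladder operator on qubit $j$ flips bit $j$ of the $X$-vector, since both $X_j$ and $Y_j$ have $x_j=1$. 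Hence, modulo 2, the $X$-vector of every Pauli string appearing in a given monomial is the sum of the unit vectors $e_j$ over the ladder operators in that monomial.

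Within each sector, the number of ladder operators is even (two or four), so the restriction of $x$ to the $\alpha$-positions has even weight, and likewise on the $\beta$-positions. Repeated indices cancel mod 2, which preserves parity. This holds for every Pauli term of $H$, including after collecting terms, because each surviving Pauli string arises from some monomial. Therefore all $X$-vectors lie in the subspace
\[
E=\{x\in\F_2^{2N}: \textstyle\sum_{j\in\alpha}x_j=0,\ \sum_{j\in\beta}x_j=0\},
\]
which has dimension $2N-2$ because the two functionals have disjoint supports and are linearly independent. This gives $r_X(S)\le 2(N-1)$ for any subset $S$.

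For tightness, the abstract promises a CH$_4$/STO-3G witness, and the proof will take that route. Pick that Hamiltonian, with $N=9$ spatial orbitals and 18 qubits, and exhibit an explicit commuting subset of its Pauli terms whose $X$-vectors span $E$. This can be verified by a finite rank computation over $\F_2$ together with pairwise symplectic checks.

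An analytic alternative would be more satisfying. In each sector, one would take hopping-type terms $X_pZ\cdots ZX_q$ and $Y_pZ\cdots ZY_q$ across adjacent orbital pairs, which generate the even-weight vectors $e_p+e_{p+1}$. The difficulty is commutation: $XZ\cdots ZX$ on the pairs $(p,p{+}1)$ and $(p{+}1,p{+}2)$ anticommute. One therefore has to use products such as the double-excitation strings $XXXX$, $XXYY$ and similar, whose families commute while still spanning the needed $X$-vectors. A clean choice is the stabilizer-like set of all-$X$ or $XXYY$ double-excitation strings on disjoint-or-nested quadruples.

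The main obstacle is guaranteeing that a generic molecular Hamiltonian actually contains these terms with nonzero coefficients and that they commute. For that reason I would rely on the explicit CH$_4$ certificate found computationally by the discrete search, stated as a verified example rather than a general construction.
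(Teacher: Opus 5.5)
Your proof of the bound is the paper's own argument: even $X$-weight in each spin sector because every term is ladder-balanced within each sector, followed by a dimension count. You phrase the count as membership in $E=E_\alpha\oplus E_\beta$ with $\dim E=2N-2$, where the paper uses $\operatorname{rank}[A\mid B]\le\operatorname{rank}A+\operatorname{rank}B$. Your tightness step relies on the same CH$_4$/STO-3G computational witness, so the proposal is correct.

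One side remark is false, though nothing depends on it: $X_pZ\cdots ZX_{p+1}$ and $X_{p+1}Z\cdots ZX_{p+2}$ commute. With $c_j=Z_{<j}X_j$ and $d_j=Z_{<j}Y_j$, the adjacent $XX$-type hopping strings of both sectors are the bilinears $d_jc_k$ on pairwise disjoint Majorana indices. They therefore already form a commuting family of $2(N{-}1)$ strings spanning $E$. The remaining obstacle for that construction is only whether $H$ gives those strings nonzero coefficients, which symmetry can prevent; that is why falling back on the CH$_4$ witness is reasonable.
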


\begin{proof}
\textit{Step 1: Even $X$-weight in each spin sector.}
In the Jordan--Wigner encoding with interleaved ordering, spin-orbital $\sigma = 2p+s$ maps to qubit $\sigma$, where $p \in \{0,\ldots,N{-}1\}$ is the spatial orbital and $s \in \{0,1\}$ the spin index. Alpha (spin-up) qubits occupy even indices $\{0,2,\ldots,2N{-}2\}$ and beta (spin-down) qubits occupy odd indices $\{1,3,\ldots,2N{-}1\}$.

Under Jordan--Wigner, each fermionic creation or annihilation operator at spin-orbital $\sigma$ contributes exactly one $X$- or $Y$-type Pauli at qubit $\sigma$ (with $x_\sigma = 1$ in the symplectic representation), together with $Z$ operators on intervening qubits (contributing only to the $z$-block). By assumption, the Hamiltonian preserves particle number independently in each spin sector: every term contains equal numbers of creation and annihilation operators within the alpha sector, and independently within the beta sector. Therefore the $X$-support of every Pauli term has \emph{even cardinality} when restricted to either spin sector.

\textit{Step 2: Even-weight subspace dimension.}
Over $\F_2^N$, the even-weight vectors $\{v \in \F_2^N : \sum_i v_i \equiv 0 \pmod{2}\}$ form a subspace of dimension $N{-}1$ (the kernel of the all-ones functional). Since every Pauli term of $H$ has even $X$-weight in each sector (Step~1), the $X$-block restricted to either sector lies in the $(N{-}1)$-dimensional even-weight subspace:
\begin{equation}
  r_X^\alpha(S) \leq N-1, \qquad r_X^\beta(S) \leq N-1.
\end{equation}

\textit{Step 3: Sector additivity.}
Alpha and beta qubits occupy disjoint index sets, so the $X$-block decomposes as $[X^\alpha \mid X^\beta]$. Since $\operatorname{rank}[A \mid B] \leq \operatorname{rank}(A) + \operatorname{rank}(B)$:
\begin{equation}
  r_X(S) \leq r_X^\alpha(S) + r_X^\beta(S) \leq 2(N-1).
\end{equation}

\textit{Tightness.} For CH$_4$ in the STO-3G basis ($N=9$, $n=18$ qubits), the $X$-rank maximizer search over commuting subsets of the Jordan--Wigner Hamiltonian finds $r_X = 16 = 2(N{-}1)$, achieving the bound exactly (Table~\ref{tab:two-sector}).
\end{proof}

\begin{remark}[Support-pattern refinement]
\label{rem:pattern}
Theorem~\ref{thm:two-sector} is a parity statement about the Hamiltonian's Pauli support.  Neither exceeding the one-sector dimension $N$ nor approaching the two-sector ceiling proves inaccessibility to a pair of orbital rotations.  A sharper support-pattern criterion would have to use the orbital-pair incidence structure, not rank alone.  Throughout this paper $X$-rank is therefore a routing diagnostic and algebraic ceiling; the strict Gaussian-exclusion claim is carried exclusively by the correlation-rank theorem of Sec.~\ref{sec:separation}.
\end{remark}

\subsection{Large-\texorpdfstring{$X$}{X}-rank commuting sets and two-sector checks}

Table~\ref{tab:two-sector} checks the sector-parity restriction across 13 molecule/basis combinations spanning 4--26 qubits and up to $12{,}731$ Pauli terms, and reports the best commuting-subset ranks found by search.

\begin{table}[h!]
\caption{Two-sector $X$-rank analysis. $N$: spatial orbitals. $n=2N$: qubits. $r_X^\alpha,r_X^\beta$: sector ranks (ceiling $N{-}1$). $r_X^{\rm best}$: best commuting-subset rank found (ceiling $2(N{-}1)$). All Pauli terms have zero sector-parity violations; only a value attaining the ceiling is thereby certified maximal.}
\label{tab:two-sector}
\begin{ruledtabular}
\begin{tabular}{llrrrrrr}
Molecule & Basis & $n$ & $N$ & Terms & $r_X^\alpha$ & $r_X^\beta$
  & $r_X^{\rm best}$ / $2(N{-}1)$ \\
\hline
H$_2$    & STO-3G & 4  & 2  & 14     & 1 & 1 & 1 / 2  \\
LiH      & STO-3G & 12 & 6  & 630    & 5 & 5 & 8 / 10 \\
BeH$_2$  & STO-3G & 14 & 7  & 665    & 6 & 6 & 9 / 12 \\
H$_2$O   & STO-3G & 14 & 7  & 1{,}085  & 6 & 6 & 10 / 12 \\
HF       & STO-3G & 12 & 6  & 630    & 5 & 5 & 8 / 10 \\
NH$_3$   & STO-3G & 16 & 8  & 3{,}044  & 7 & 7 & 13 / 14 \\
N$_2$    & STO-3G & 20 & 10 & 2{,}950  & 9 & 9 & 16 / 18 \\
CO       & STO-3G & 20 & 10 & 5{,}850  & 9 & 9 & 17 / 18 \\
\textbf{CH$_4$} & \textbf{STO-3G} & \textbf{18} & \textbf{9}
  & \textbf{6{,}891} & \textbf{8} & \textbf{8} & \textbf{16 / 16} \\
\hline
LiH      & 6-31G  & 22 & 11 & 8{,}757  & 10 & 10 & 19 / 20 \\
BeH$_2$  & 6-31G  & 26 & 13 & 9{,}203  & 12 & 12 & 22 / 24 \\
H$_2$O   & 6-31G  & 26 & 13 & 12{,}731 & 12 & 12 & 22 / 24 \\
HF       & 6-31G  & 22 & 11 & 8{,}757  & 10 & 10 & 19 / 20 \\
\end{tabular}
\end{ruledtabular}
\end{table}

The term counts in Table~\ref{tab:two-sector} exclude the identity.  For NH$_3$, exact Pauli sparsity can depend on the orbital gauge within the degenerate $E$ subspace of $C_{3v}$; the reported rank is therefore tied to the stored operator used for this analysis.

Three features are noteworthy.  First, the best-found families reach the sector ceilings $N{-}1$ for every molecule with $N\geq6$, showing that the search explores the full allowed sector projections in these data.  Second, CH$_4$/STO-3G attains $r_X=16=2(N{-}1)$, certifying tightness of Theorem~\ref{thm:two-sector}.  Third, the sector-parity structure persists in the tested 6-31G operators up to 26 qubits.

\subsection{Strict separation from the orbital-rotation model}
\label{sec:strictsep}

Theorem~\ref{thm:two-sector} establishes a tight ceiling on X-rank, but a tight upper bound does not by itself prove that specific commuting families lie outside the orbital-rotation model. We now prove a strict separation by a different route.

\begin{theorem}[Strict single-context separation]
\label{thm:separation}
Even within a fixed-particle-number symmetry sector, the commuting Clifford-accessible single-context class contains observables that are not measurable in one two-sector particle-number-preserving orbital-rotation context.
\end{theorem}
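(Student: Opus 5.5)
We will prove the theorem in the smallest nontrivial setting: two spatial orbitals, one $\alpha$ and one $\beta$ electron. The $(1,1)$ sector is then four-dimensional, spanned by $|p\rangle_\alpha\otimes|q\rangle_\beta$ with $p,q\in\{0,1\}$. It is naturally a two-qubit space $\mathbb{C}^2_\alpha\otimes\mathbb{C}^2_\beta$, where the $\alpha$ "qubit" records which spatial orbital the $\alpha$ electron occupies, and likewise for $\beta$.

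\emph{Step 1: action of one orbital-rotation context.} A two-sector number-preserving orbital rotation acts on this sector as $U_\alpha\otimes U_\beta$ with $U_\alpha,U_\beta\in U(2)$. Any occupation-diagonal fragment $D$ restricted to the sector is diagonal in the $|p\rangle|q\rangle$ basis. Hence it can be written $D=c_0 I + a\,Z_\alpha + b\,Z_\beta + t\,Z_\alpha Z_\beta$. After conjugation,
\[
U D U^\dagger = c_0 I + a\,(\hat m\cdot\vec\sigma_\alpha) + b\,(\hat n\cdot\vec\sigma_\beta) + t\,(\hat m\cdot\vec\sigma_\alpha)(\hat n\cdot\vec\sigma_\beta),
\]
with unit vectors $\hat m,\hat n\in\mathbb{R}^3$. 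Expanding any sector observable as $\sum T_{ij}\,\sigma^i_\alpha\sigma^j_\beta$ plus local terms, the correlation block of one context is $T=t\,\hat m\hat n^{\mathsf T}$, which has rank at most one. Measuring an observable in one context means it is a function of the two commuting projectors, so it lies in this span. Since $T$ is linear in the observable, $K$ contexts can produce a correlation block of rank at most $K$.

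\emph{Step 2: the witness.} Take the Heisenberg/Bell-diagonal observable $W=\sigma^x_\alpha\sigma^x_\beta+\sigma^y_\alpha\sigma^y_\beta+\sigma^z_\alpha\sigma^z_\beta$, or any Bell-diagonal operator with three nonzero correlation coefficients. Its correlation block is $T=\mathrm{diag}(1,1,1)$, of rank three. By Step 1 it is not measurable in one orbital-rotation context, and in fact needs at least three.

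\emph{Step 3: Clifford accessibility.} We transport $W$ to physical qubits. In the interleaved Jordan--Wigner ordering, the sector operators $\sigma^{x,y}_\alpha\sigma^{x,y}_\beta$ are generated by the hoppings $a^\dagger_{0\alpha}a_{1\alpha}$ and $a^\dagger_{0\beta}a_{1\beta}$. Products of these hoppings map to four-qubit Pauli strings; for example, $X_0X_2X_1X_3$-type terms, with $Z$ strings that cancel or become sign factors within the sector. The representatives of $\sigma^x\sigma^x$, $\sigma^y\sigma^y$, and $\sigma^z\sigma^z$ can be chosen as mutually commuting Pauli strings. Note that $XX$, $YY$, and $ZZ$ commute pairwise, and this commutation survives because each representative is a product of two commuting sector pieces whose anticommutations come in pairs. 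Theorem~\ref{thm:clifford-diag} then supplies a single Clifford circuit diagonalizing all three, so $W$ lies in the single-context Clifford class.

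The main obstacle is Step 1 at full rigor. We must justify that "measurable in one context" forces the observable into the commutative algebra generated by the rotated number operators. We must also check that the sector restriction of an arbitrary diagonal fragment has no further correlation terms, and that the correlation block is well defined modulo local and identity parts. For Step 3, we also need to make sure that the Pauli representatives agree with the sector operators on the $(1,1)$ subspace; we will verify this by direct computation of the Jordan--Wigner signs.
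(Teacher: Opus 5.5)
Your argument is correct and is essentially the paper's proof. It uses the same $(1,1)$ two-orbital reduction, in which one context acts as $u_\alpha\otimes u_\beta$ on a diagonal fragment and so has a rank-one correlation block (Lemma~\ref{lem:rank1}), the same rank-three Bell/Heisenberg witness, and Clifford diagonalization of the commuting triple, with your Step~3 playing the role of the paper's physical-qubit discharge in Sec.~\ref{sec:physical}. One detail to correct when you do the sign check: in interleaved ordering the $Z$ inside each hopping acts on the other rail and is not a sign factor on the sector, so restricted to the sector $\sigma^x_\alpha\sigma^x_\beta = X_0Y_1Y_2X_3$ and $\sigma^z_\alpha\sigma^z_\beta = Z_0Z_1$, while $X_0X_1X_2X_3$ actually represents $\sigma^y_\alpha\sigma^y_\beta$ up to sign; this is still a pairwise-commuting triple, so your conclusion stands, and the paper avoids the issue by placing each rail on adjacent qubits, where $P_X=X_1X_2X_3X_4$, $P_Y=Y_1X_2Y_3X_4$, $P_Z=Z_1Z_3$ are diagonalized by the explicit circuit of Eq.~\eqref{eq:physical-diagonalizer}.
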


\begin{proof}
Consider the $(N_\alpha,N_\beta)=(1,1)$ sector of two spatial orbitals per spin. Define logical qubit bases
\begin{align}
  |0\rangle_\sigma &:= a_{1\sigma}^\dagger|\Omega\rangle, &
  |1\rangle_\sigma &:= a_{2\sigma}^\dagger|\Omega\rangle,
\end{align}
for $\sigma \in \{\alpha,\beta\}$, so that $\mathcal{H}_{1,1} \cong \mathbb{C}^2 \otimes \mathbb{C}^2$. On this subspace, a particle-number-preserving orbital rotation in sector $\sigma$ acts as an arbitrary $u_\sigma \in U(2)$ on the logical qubit, and an occupation-basis diagonal fragment restricts to a computational-basis diagonal operator. Therefore every observable readable in one two-sector orbital-rotation context has the form $(u_\alpha\otimes u_\beta)^\dagger D_{\mathrm{log}}(u_\alpha\otimes u_\beta)$, a single local product-basis context on two logical qubits.

Now define logical Pauli operators $X_\sigma^L = a_{1\sigma}^\dagger a_{2\sigma} + a_{2\sigma}^\dagger a_{1\sigma}$, $Y_\sigma^L = -i(a_{1\sigma}^\dagger a_{2\sigma} - a_{2\sigma}^\dagger a_{1\sigma})$, $Z_\sigma^L = n_{1\sigma} - n_{2\sigma}$, and consider
\begin{equation}
  O_{\mathrm{Bell}} = \lambda_x X_\alpha^L X_\beta^L + \lambda_y Y_\alpha^L Y_\beta^L + \lambda_z Z_\alpha^L Z_\beta^L.
\label{eq:obell-thm3}
\end{equation}
This observable is number-preserving in both sectors. Its two-qubit correlation matrix is $T = \mathrm{diag}(\lambda_x, \lambda_y, \lambda_z)$, which has rank $3$ when all coefficients are nonzero. But any single local product-basis sector has correlation-matrix rank at most $1$. Hence $O_{\mathrm{Bell}}$ cannot arise from any single two-sector orbital rotation of any diagonal occupation-basis fragment.

On the other hand, $\{X_\alpha^L X_\beta^L, Y_\alpha^L Y_\beta^L, Z_\alpha^L Z_\beta^L\}$ is a commuting family simultaneously diagonalized by the Bell-basis Clifford circuit $\mathrm{CNOT}(H \otimes I)$. Therefore $O_{\mathrm{Bell}}$ is readable in one Clifford context but not in one orbital-rotation context.
\end{proof}

The cleanest concrete instance is the isotropic logical Heisenberg coupling
\begin{equation}
  O_{\mathrm{Heis}} = J\big(X_\alpha^L X_\beta^L + Y_\alpha^L Y_\beta^L + Z_\alpha^L Z_\beta^L\big),
  \label{eq:heisenberg}
\end{equation}
which is Bell-diagonal on the logical qubits, number-preserving in both spin sectors, and outside the two-sector orbital-rotation model by the argument above. The separation is therefore no formal curiosity: the logical Heisenberg exchange interaction is a physically natural observable in quantum chemistry (as an effective spin--spin coupling) and ubiquitous in condensed-matter and spin-model contexts.

More generally, Theorem~\ref{thm:rank} applies to any structured basis-change model $\mathcal{M}(\mathcal{G}_n, K)$ consisting of observables representable as $O = \sum_{\ell=1}^K U_\ell^\dagger D_\ell U_\ell$ with $U_\ell \in \mathcal{G}_n$ and $D_\ell$ diagonal. Whenever the local structure of $\mathcal{G}_n$ limits the achievable correlation-matrix rank per sector, a Bell-type separation family exists.

\begin{remark}[Incomparability of the Gaussian and Clifford context classes]
\label{rem:incomparable}
The separation runs in one direction only.  A generic orbital-rotation context measures the commutative algebra $U_G^\dagger\,(\text{occupation-diagonal})\,U_G$, whose Pauli components fail to commute pairwise; the visible space is a commuting algebra without being a commuting Pauli family, and Theorem~\ref{thm:clifford-diag} therefore does not apply to it.  Concretely, in a one-particle two-orbital sector an orbital rotation realizes $O_\theta = \cos\theta\, Z_L + \sin\theta\, X_L$ in a single Gaussian context for any $\theta$, while a single-qubit Clifford context has traceless visible space spanned by one Pauli axis, so generic $O_\theta$ requires two Clifford contexts.  Conversely, the Bell family requires at least three Gaussian contexts but one Clifford context.  The two single-context classes are therefore incomparable: neither contains the other, and the operationally valid monotonicity statement is for unions, $\Phi_{\mathcal{D}_{\rm G}\cup\mathcal{D}_{\rm C}} \le \min\{\Phi_{\mathcal{D}_{\rm G}},\, \Phi_{\mathcal{D}_{\rm C}}\}$.
\end{remark}

\section{Quantitative Context Complexity: the Central Separation}
\label{sec:separation}

We now state the central separation in the model where it is exact: the fixed $(1,1)$-particle sector of two spatial orbitals per spin.  The result lower-bounds the number of orbital-rotation contexts needed to reconstruct an observable in this sector and gives the complete approximation curve for its two-body correlation block.

\subsection{The correlation matrix of an orbital-rotation context}

Work in a fixed $(1,1)$-particle sector of two spatial orbitals per spin channel: the dual-rail encoding in which each logical qubit is carried by one occupied-or-not orbital pair, and independent $\alpha$/$\beta$ orbital rotations act as \emph{independent local} unitaries on the two logical qubits.  (Sec.~\ref{sec:physical} discharges this at the physical-qubit level; nothing below relies on the logical picture alone.)

For a two-qubit observable $O$, define the two-body correlation matrix
\begin{equation}
T(O)_{ij} \;=\; \tfrac14 \operatorname{Tr}\!\bigl[\,O\,(\sigma_i \otimes \sigma_j)\,\bigr],
\qquad i,j \in \{x,y,z\}.
\end{equation}

\begin{lemma}[One Gaussian context has correlation rank at most one]
\label{lem:rank1}
Let a single particle-number-preserving orbital-rotation context measure the diagonal fragment
\[
D = d_0\, I\!\otimes\! I + d_\alpha\, Z\!\otimes\! I + d_\beta\, I\!\otimes\! Z + d_{\alpha\beta}\, Z\!\otimes\! Z
\]
after the local rotation $u_\alpha \otimes u_\beta$.  Then the fragment it can report is $O = (u_\alpha\otimes u_\beta)^\dagger D (u_\alpha \otimes u_\beta)$, and
\[
T(O) \;=\; d_{\alpha\beta}\; \mathbf{a}\,\mathbf{b}^{\!\top},
\qquad \operatorname{rank} T(O) \le 1,
\]
where $\mathbf a, \mathbf b \in \mathbb R^3$ are the Bloch vectors of $u_\alpha^\dagger Z u_\alpha$ and $u_\beta^\dagger Z u_\beta$.
\end{lemma}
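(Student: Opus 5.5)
The plan is a direct computation exploiting the fact that conjugation by a product unitary $u_\alpha\otimes u_\beta$ acts factor by factor. The first step is to note that the reported fragment really is $O=(u_\alpha\otimes u_\beta)^\dagger D(u_\alpha\otimes u_\beta)$. This follows from the reduction already used in the proof of Theorem~\ref{thm:separation}. On $\mathcal H_{1,1}$, the sector rotations act as $u_\alpha\otimes u_\beta$, and an occupation-diagonal fragment restricts to a computational-basis diagonal operator. The general two-qubit diagonal operator is spanned by $I\otimes I$, $Z\otimes I$, $I\otimes Z$ and $Z\otimes Z$, which gives the stated form of $D$.

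Next, expand $O$ by linearity and use the multiplicativity of conjugation on tensor products:
\[
O=d_0\,I\otimes I+d_\alpha\,A\otimes I+d_\beta\,I\otimes B+d_{\alpha\beta}\,A\otimes B,
\]
where $A=u_\alpha^\dagger Z u_\alpha$ and $B=u_\beta^\dagger Z u_\beta$. Each of $A$ and $B$ is Hermitian, traceless and squares to $I$. We may therefore write $A=\sum_i a_i\sigma_i$ and $B=\sum_j b_j\sigma_j$ with real unit Bloch vectors $\mathbf a,\mathbf b$. In adjoint-representation language, $\mathbf a$ is the image of $\hat z$ under the $SO(3)$ rotation induced by $u_\alpha^\dagger$.

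Then compute $T(O)_{ij}=\tfrac14\operatorname{Tr}[O(\sigma_i\otimes\sigma_j)]$ term by term, using $\operatorname{Tr}(\sigma_k\sigma_l)=2\delta_{kl}$ and $\operatorname{Tr}\sigma_k=0$.
\begin{itemize}
\item The identity term contributes zero, since it gives $\operatorname{Tr}\sigma_i\operatorname{Tr}\sigma_j=0$.
\item The $d_\alpha$ term contributes zero, because it contains the factor $\operatorname{Tr}(I\,\sigma_j)=0$.
\item The $d_\beta$ term contributes zero by the same argument applied to the first factor.
\item The $d_{\alpha\beta}$ term gives $\tfrac14 d_{\alpha\beta}\operatorname{Tr}(A\sigma_i)\operatorname{Tr}(B\sigma_j)=\tfrac14 d_{\alpha\beta}(2a_i)(2b_j)=d_{\alpha\beta}a_ib_j$.
\end{itemize}
Hence $T(O)=d_{\alpha\beta}\,\mathbf a\mathbf b^{\top}$. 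This is an outer product of two vectors, so its rank is at most one, and it is exactly zero when $d_{\alpha\beta}=0$.

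No step is a genuine obstacle. The only point needing care is the first one: justifying that the context's readable fragment is exactly this conjugated diagonal operator with local $u_\alpha\otimes u_\beta$. I would handle it by citing the logical-qubit reduction from the proof of Theorem~\ref{thm:separation}. That reduction is the observation that a number-preserving rotation restricted to the one-particle, two-orbital subspace of each spin sector is an arbitrary element of $U(2)$ on that logical qubit. I would also note that any global phase or fermionic sign acts trivially under conjugation, so it cannot affect $T(O)$.
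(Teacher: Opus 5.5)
Your proposal is correct and follows essentially the same route as the paper: expand $O$ by linearity, write $u^\dagger Z u=\mathbf a\cdot\boldsymbol\sigma$ with a real unit Bloch vector, note that the identity and single-$Z$ terms have no two-body component, and read off $T(O)=d_{\alpha\beta}\,\mathbf a\mathbf b^{\top}$ from the $Z\otimes Z$ term. Your explicit trace computation and your appeal to the logical-qubit reduction in the proof of Theorem~\ref{thm:separation} spell out steps that the paper takes as given, and that citation is not circular.
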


\begin{proof}
Under conjugation by a single-qubit unitary, $u^\dagger Z u = \mathbf a\cdot\boldsymbol\sigma$ with $\mathbf a \in \mathbb R^3$, $\|\mathbf a\|_2 = 1$.  The identity and single-$Z$ terms of $D$ contribute nothing to $T$ (they have no two-body component), and the $Z\otimes Z$ term maps to $d_{\alpha\beta}(\mathbf a\cdot\boldsymbol\sigma)\otimes(\mathbf b\cdot\boldsymbol\sigma)$, whose correlation matrix is $d_{\alpha\beta}\mathbf a\mathbf b^\top$, an outer product, hence of rank at most one.
\end{proof}

\begin{theorem}[Context-complexity lower bound]
\label{thm:rank}
Within the fixed-sector model of this section, let $K_{\rm orbital}(O)$ be the least number of particle-number-preserving orbital-rotation contexts whose measured fragments sum to $O$.  Then
\begin{equation}
\boxed{\;K_{\rm orbital}(O) \;\ge\; \operatorname{rank} T(O).\;}
\end{equation}
\end{theorem}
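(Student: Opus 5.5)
The plan is to combine Lemma~\ref{lem:rank1} with the linearity of the correlation-matrix map $O\mapsto T(O)$ and subadditivity of matrix rank. First, $T(O)_{ij}=\tfrac14\operatorname{Tr}[O(\sigma_i\otimes\sigma_j)]$ is a linear functional of $O$ for each pair $(i,j)$. Therefore the map $O\mapsto T(O)$ from two-qubit observables to real $3\times3$ matrices is linear. Reality follows because $O$ and the Pauli products are Hermitian.

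Suppose $O=\sum_{\ell=1}^{K}O_\ell$, where each $O_\ell=(u_\alpha^{(\ell)}\otimes u_\beta^{(\ell)})^\dagger D_\ell(u_\alpha^{(\ell)}\otimes u_\beta^{(\ell)})$ is the fragment reported by one orbital-rotation context, with $D_\ell$ occupation-diagonal of the form in Lemma~\ref{lem:rank1}. By linearity,
\begin{equation}
T(O)=\sum_{\ell=1}^{K}T(O_\ell)=\sum_{\ell=1}^{K}d^{(\ell)}_{\alpha\beta}\,\mathbf a_\ell\mathbf b_\ell^{\!\top}.
\end{equation}
Lemma~\ref{lem:rank1} gives $\rank T(O_\ell)\le1$ for every $\ell$. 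Subadditivity of rank, $\rank(A+B)\le\rank A+\rank B$, then yields $\rank T(O)\le K$. Taking $K=K_{\rm orbital}(O)$, the minimum over all valid decompositions, gives the boxed inequality. If no finite decomposition exists, we set $K_{\rm orbital}(O)=\infty$ and the bound holds trivially. In fact any two-qubit Hermitian $O$ decomposes into at most $16$ local Pauli-product contexts, so $K_{\rm orbital}(O)$ is always finite.

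The only step needing care is justifying that an arbitrary context's fragment really has the form assumed in Lemma~\ref{lem:rank1}. This requires two facts on the $(1,1)$ sector:
\begin{enumerate}[label=(\alph*)]
\item every occupation-diagonal operator restricts to a general diagonal two-qubit operator $d_0 I\otimes I+d_\alpha Z\otimes I+d_\beta I\otimes Z+d_{\alpha\beta}Z\otimes Z$;
\item a two-sector particle-number-preserving orbital rotation restricts to $u_\alpha\otimes u_\beta$.
\end{enumerate}
For (a), the four $\pm1$ eigenvalue patterns of $Z\otimes I$, $I\otimes Z$, $Z\otimes Z$ span all diagonal matrices on four states. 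Fact (b) was already established in the proof of Theorem~\ref{thm:separation}: $U_G$ acts on $a^\dagger_{p\sigma}$ by a $U(2)$ matrix in each spin channel, so on the singly occupied states of each sector it acts as that matrix. Possible global phases from the vacuum factor cancel under conjugation. I would state these reductions briefly and cite that proof rather than redo them.

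I expect no real obstacle; the argument is the standard "rank-one summands" bound. The subtle points are bookkeeping rather than substance. Identity and single-body terms of each $D_\ell$ must not contaminate $T$, and Lemma~\ref{lem:rank1} already handles this. Phase or sign conventions in the logical Pauli operators only rescale rows or columns of $T$, which leaves the rank unchanged.
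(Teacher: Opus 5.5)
Your proposal is correct and follows the paper's proof: linearity of $O\mapsto T(O)$, the rank-one bound of Lemma~\ref{lem:rank1} for each context's fragment, and subadditivity of rank together give $\operatorname{rank}T(O)\le K$ for every admissible decomposition. Your additional checks are sound but go beyond what the paper writes out; these are the reduction of an arbitrary context's fragment to the form assumed in Lemma~\ref{lem:rank1}, and the finiteness of $K_{\rm orbital}(O)$.
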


\begin{proof}
If $O = \sum_{k=1}^{K} O_k$ with each $O_k$ measured by one orbital-rotation context, then $T$ is linear in $O$, so $T(O) = \sum_k T(O_k)$, and each $T(O_k)$ has rank at most one by Lemma~\ref{lem:rank1}.  Subadditivity of rank gives $\operatorname{rank} T(O) \le K$.
\end{proof}

\subsection{The exact accuracy--context trade-off}
\label{sec:eckart}

Theorem~\ref{thm:rank} is the zero-residual case of a stronger and completely explicit statement: how well can $K$ orbital-rotation contexts approximate an observable they cannot represent?

\begin{theorem}[Exact Gaussian approximation error of the correlation block]
\label{thm:eckart}
Within the same fixed-sector model, let $\sigma_1\geq\sigma_2\geq\sigma_3\geq0$ be the singular values of $T(O)$.  The best approximation of the correlation block $T(O)$ attainable with $K$ particle-number-preserving orbital-rotation contexts satisfies
\begin{equation}
\label{eq:eckart}
\min_{\text{$K$ Gaussian contexts}} \bigl\| T(O) - \textstyle\sum_{k=1}^{K} T_k \bigr\|_F
\;=\; \sqrt{\textstyle\sum_{i>K} \sigma_i^2 } .
\end{equation}
In particular the residual vanishes iff $K \ge \operatorname{rank} T(O)$, which recovers Theorem~\ref{thm:rank}.
\end{theorem}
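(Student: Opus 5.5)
The plan is to identify the feasible set exactly and then invoke Eckart--Young--Mirsky. First I will characterize which correlation blocks a single context can contribute. Lemma~\ref{lem:rank1} already gives one direction: each context contributes $T_k=d_{\alpha\beta}^{(k)}\mathbf a_k\mathbf b_k^{\top}$ with unit vectors $\mathbf a_k,\mathbf b_k\in\mathbb R^3$ and $d^{(k)}_{\alpha\beta}\in\mathbb R$, so $T_k$ has rank at most one.

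For the converse, I will show that every real rank-$\le1$ matrix arises in this way. Write a nonzero such matrix as $s\,\mathbf a\mathbf b^{\top}$ with $s\in\mathbb R$ and unit $\mathbf a,\mathbf b$. Since $SU(2)$ maps onto $SO(3)$, conjugation $u^\dagger Z u$ reaches every point of the Bloch sphere. I can therefore choose $u_\alpha$ with $u_\alpha^\dagger Z u_\alpha=\mathbf a\cdot\boldsymbol\sigma$, choose $u_\beta$ with $u_\beta^\dagger Z u_\beta=\mathbf b\cdot\boldsymbol\sigma$, and set $d_{\alpha\beta}=s$. The remaining diagonal coefficients are free, and the zero matrix is obtained from $d_{\alpha\beta}=0$. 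Because the $K$ contexts are chosen independently, the set of achievable sums $\sum_{k=1}^K T_k$ is exactly the set of real $3\times3$ matrices of rank at most $K$. One direction uses subadditivity of rank; the other decomposes a rank-$\le K$ matrix into $K$ rank-one terms, for instance through its SVD.

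With this identification, the left side of Eq.~\eqref{eq:eckart} equals $\min_{\operatorname{rank}M\le K}\|T(O)-M\|_F$. The Eckart--Young--Mirsky theorem for the Frobenius norm gives this value as $\sqrt{\sum_{i>K}\sigma_i^2}$, and the truncated SVD $\sum_{i\le K}\sigma_i\mathbf u_i\mathbf v_i^{\top}$ attains it. Each term has $\mathbf u_i,\mathbf v_i$ real orthonormal, because $T(O)$ is real for Hermitian $O$. Each term is then realized by an explicit context with $d_{\alpha\beta}=\sigma_i$, $\mathbf a=\mathbf u_i$ and $\mathbf b=\mathbf v_i$. This shows the minimum is attained, not merely approached as an infimum. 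For $K\ge3$ the tail is empty and the residual is $0$.

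The ``iff'' statement follows because the tail vanishes exactly when $\sigma_{K+1}=0$, which is the condition $K\ge\operatorname{rank}T(O)$. This recovers Theorem~\ref{thm:rank}.

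I expect the surjectivity step to be the main obstacle, although it is mild. It requires checking that $T(O)$ is real, that the Bloch map $u\mapsto u^\dagger Zu$ covers the full sphere, and that a negative $s$ is absorbed by the sign of $d_{\alpha\beta}$ or by flipping $\mathbf a$. It also requires being explicit that the theorem concerns only the correlation block. The one-body and identity parts of the $K$ fragments are unconstrained here, so they do not enter the minimization.
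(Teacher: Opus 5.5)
Your proposal is correct and follows essentially the same route as the paper: you identify the set of achievable sums $\sum_{k\le K}T_k$ with all real $3\times3$ matrices of rank at most $K$, using Lemma~\ref{lem:rank1} together with surjectivity of the Bloch map, and then apply Eckart--Young. You are somewhat more explicit than the paper's terse proof about why $T(O)$ is real and why the truncated SVD is physically realizable. The paper makes the realizability point in the paragraph following the theorem.
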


\begin{proof}
By Lemma~\ref{lem:rank1} each context contributes $T_k = d_{\alpha\beta}\,\mathbf a_k\mathbf b_k^\top$ with $\mathbf a_k, \mathbf b_k$ unit vectors and $d_{\alpha\beta}$ free, so $\{T_k\}$ ranges over \emph{all} real rank-one $3\times3$ matrices, and $\sum_{k\le K} T_k$ ranges over all real matrices of rank at most $K$.  The minimization is therefore exactly the best rank-$K$ approximation of $T(O)$ in Frobenius norm, whose value is given by the Eckart--Young theorem~\cite{EckartYoung1936}.
\end{proof}

This converts the counting obstruction into an exact correlation-block trade-off: for a fixed $K$, the irreducible residual is the singular-value tail.  (The statement is exact for the correlation block; for observables with nonvanishing local Bloch components, which share context axes with the correlation factors, the full-observable trade-off remains open.  The Bell family below has no local components, so the distinction is immaterial for the witness.)  For the Bell family of Eq.~\eqref{eq:obell} with $(\lambda_x,\lambda_y,\lambda_z) = (0.7,-0.4,1.1)$, the singular values are $(1.1, 0.7, 0.4)$, so one Gaussian context leaves residual $0.806$, two leave $0.400$, and only at three does the residual vanish, while a single Clifford context is exact.

\emph{Independent confirmation by global search.}  The bound of Eq.~\eqref{eq:eckart} is derived algebraically, and so is its attainment: the singular factors of $T$ are unit vectors, every unit vector in $\mathbb{R}^3$ is the Bloch vector of some $SU(2)$ rotation, so the truncated SVD is itself a set of physically admissible contexts and the Eckart--Young value is reached by construction.  An independent RANGE search~\cite{RANGE} over the physical rotation angles numerically corroborates the implementation, reproducing the Eckart--Young values to six significant figures at $K=1$ and four at $K=2$.  At $K=3$ the finite-budget search leaves residual $1.8\times10^{-2}$, whereas the constructive SVD solution is exactly zero; the discrepancy measures search convergence, not a failure of the theorem.

\subsection{A physical witness with correlation rank three}
\label{sec:physical}

The bound is only as interesting as the observables that saturate it.  Consider the Bell-diagonal family
\begin{equation}
\label{eq:obell}
O_{\rm Bell} = \lambda_x X_LX_L + \lambda_y Y_LY_L + \lambda_z Z_LZ_L ,
\quad \lambda_x\lambda_y\lambda_z \ne 0 .
\end{equation}
Its correlation matrix is $T = \operatorname{diag}(\lambda_x,\lambda_y,\lambda_z)$, of rank three.  Theorem~\ref{thm:rank} therefore gives $K_{\rm orbital}(O_{\rm Bell}) \ge 3$.

Conversely $X_LX_L$, $Y_LY_L$, $Z_LZ_L$ pairwise commute, so by Theorem~\ref{thm:clifford-diag} a \emph{single} Clifford circuit diagonalizes all three simultaneously (explicitly, the Bell-basis Clifford $\mathrm{CNOT}\cdot(H\otimes I)$), and one context measures the whole family.  Hence
\begin{equation}
K_{\rm orbital}(O_{\rm Bell}) \ge 3,
\qquad
K_{\rm Clifford}(O_{\rm Bell}) = 1 .
\end{equation}

\emph{The physical qubits.}  The logical statement above must be discharged on the orbital qubits themselves, since a logical Clifford need not be a physical one.  Encode logical qubit $A$ in spin-orbitals $(1,2)$ and $B$ in $(3,4)$, in the sector with exactly one electron per rail; the code is stabilized by the parity operators $S_A = Z_1Z_2$ and $S_B = Z_3Z_4$ (eigenvalue $-1$ on the code space).  Modulo these stabilizers, the logical Paulis admit the physical representatives
\begin{equation}
X_L \sim X_1X_2, \qquad Y_L \sim Y_1X_2, \qquad Z_L \sim Z_1
\end{equation}
on each rail.  Define the physical representatives
\begin{align}
P_X&=X_1X_2X_3X_4,\\
P_Y&=Y_1X_2Y_3X_4,\\
P_Z&=Z_1Z_3.
\end{align}
They pairwise commute and commute with $S_A$ and $S_B$, so they preserve the $(1,1)$ sector.  Moreover, $P_Y=-P_XP_Z$.  An explicit physical diagonalizer is
\begin{equation}
U_{\rm phys}=H_1\,\mathrm{CNOT}_{1\to4}\,
\mathrm{CNOT}_{1\to2}\,\mathrm{CNOT}_{1\to3},
\label{eq:physical-diagonalizer}
\end{equation}
where the rightmost gate acts first.  Direct Pauli conjugation gives
\begin{align}
U_{\rm phys}P_XU_{\rm phys}^\dagger&=Z_1,\\
U_{\rm phys}P_ZU_{\rm phys}^\dagger&=Z_3,\\
U_{\rm phys}P_YU_{\rm phys}^\dagger&=-Z_1Z_3.
\end{align}
Thus one physical Clifford context measures the complete witness family.  The separation is operational, not only a logical-subspace abstraction.

\emph{Numerical cross-checks.}  The accompanying tests sample more than 300 orbital-rotation contexts, check rank subadditivity through $K=3$, verify the Bell-family rank, and confirm exact commutation and code-space preservation of the physical representatives.  These checks test the implementation; the rank statements themselves follow from the algebra above.

\emph{Relation to shot cost.}  Theorem~\ref{thm:rank} is a feasibility obstruction, not by itself a shot-saving theorem.  It forces a Gaussian dictionary to represent the witness with at least $r$ fragments when $\operatorname{rank}T=r$, whereas a Clifford dictionary contains a one-context representation.  Whether this changes total sampling cost depends on the fragment covariances and per-shot costs.  Equation~\eqref{eq:copt} performs that optimization; context count alone does not.

\section{Cost Comparison via the Certified Functional}
\label{sec:cost}

A measurement strategy is a decomposition $H = \sum_k F_k$ into fragments, each measured in one context $k$ at a per-shot cost
\[
c_k \;=\; T_{\rm prep} + T^{(k)}_{\rm meas},
\]
where $T_{\rm prep}$ is the state-preparation cost of one execution and $T^{(k)}_{\rm meas}$ the non-Clifford cost of that context's basis-change circuit.  A shot executes \emph{one} context, so $c_k$ is charged per shot in context $k$ and never summed across contexts within a shot.  With the shot budget allocated optimally, the total cost of estimating $\langle H\rangle$ to additive error $\varepsilon$ is exactly the certified functional of the companion framework \cite{Zahariev2026CCF}:
\begin{equation}
\label{eq:copt}
C(\varepsilon) \;=\; \frac{1}{\varepsilon^{2}}
\left[\;\min_{\{F_k\}:\,\sum_k F_k = H}\;
\sum_k \sqrt{c_k}\,\sqrt{\operatorname{Var}_\rho(F_k)}\;\right]^{2}.
\end{equation}
The bracket is the certified constant $\Phi$ of that framework: it is the value of a second-order cone program, it is attained, and it comes with a machine-checkable dual witness.  Comparing two measurement classes therefore means evaluating \eqref{eq:copt} twice on the same Hamiltonian, the same state model, and the same cost model, once per declared dictionary.  The production instantiation of Sec.~\ref{sec:results} compares product (QWC) settings against their union with the fully commuting Clifford-accessible settings; we do not construct a production-scale Gaussian context dictionary, so the certified percentages price the fully commuting enlargement rather than the Gaussian class itself.

This is the only cost comparison we make.  We deliberately avoid per-shot crossover inequalities and asymptotic fragment-count arguments, both of which mix accounting levels: a per-fragment basis-change cost multiplied by a fragment count is not a per-shot quantity, and a shot does not pay every context's basis change.  Equation~\eqref{eq:copt} keeps the two levels separate by construction: $c_k$ is per shot, $\operatorname{Var}(F_k)$ sets how many shots context $k$ receives, and the optimizer trades them off.

Two consequences organize the comparison.  First, dictionary monotonicity holds only for \emph{unions}: the Gaussian and stabilizer-Clifford context classes are incomparable at the single-context level (Remark~\ref{rem:incomparable}), so neither $\Phi_{\rm Clifford} \le \Phi_{\rm Gaussian}$ nor its reverse holds a priori, while
\[
\Phi_{\mathcal{D}_{\rm G}\cup\mathcal{D}_{\rm C}} \;\le\; \min\{\Phi_{\mathcal{D}_{\rm G}},\, \Phi_{\mathcal{D}_{\rm C}}\}
\]
always does: a hybrid dictionary can only improve on either class alone.  Second, Theorem~\ref{thm:rank} constrains the feasible decompositions: the rank-$r$ witness cannot occupy fewer than $r$ orbital contexts, while one Clifford context is feasible.  The optimizer then decides how much that structural difference is worth after covariance and per-shot cost are included.

We report the resulting certified numbers in Sec.~\ref{sec:results}, computed by the companion framework's solver rather than estimated from a scaling formula.

\section{Numerical Results}
\label{sec:results}

This section presents our numerical results and interprets them through the refined theory above.

\subsection{Circuit counts}

Table~\ref{tab:circuit-counts} compares the number of measurement circuits produced by CSA and by a greedy clique-cover heuristic on the commutativity graph. For H$_2$, LiH, and BeH$_2$, the graph-based strategy finds fewer groups than CSA. For H$_2$O and NH$_3$, the current greedy clique-cover procedure performs worse, which emphasizes that the graph formulation and the specific heuristic procedure should be distinguished conceptually: the formulation defines the design space, while the procedure is one particular, and improvable, way of searching it.

\emph{Term-count provenance.}  The H$_2$, LiH, BeH$_2$, and H$_2$O counts in Table~\ref{tab:circuit-counts} include the identity, explaining their one-term offsets from Table~\ref{tab:two-sector}.  The NH$_3$ row is tied to the separate stored operator used for the CSA/grouping comparison; because the degenerate $E$ orbital gauge can change exact Pauli sparsity, its term count is not directly comparable with that of Table~\ref{tab:two-sector}.
\begin{table}[t]
    \centering
    \caption{Number of measurement circuits. Positive $\Delta=K_{\CSA}-K_{\CC}$ means the graph-based Clifford grouping uses fewer circuits than CSA.}
    \label{tab:circuit-counts}
    \small
    \begin{tabular}{lrrrrrr}
        \toprule
        Molecule & $n$ & $N$ & Terms & $K_{\CSA}$ & $K_{\CC}$ & $\Delta$ \\
        \midrule
        H$_2$   & 4  & 2 & 15   & 4  & 2   & +2   \\
        LiH     & 12 & 6 & 631  & 22 & 18  & +4   \\
        BeH$_2$ & 14 & 7 & 666  & 29 & 20  & +9   \\
        H$_2$O  & 14 & 7 & 1086 & 29 & 34  & -5   \\
        NH$_3$  & 16 & 8 & 5343 & 37 & 140 & -103 \\
        \bottomrule
    \end{tabular}
\end{table}

The circuit counts evaluate particular heuristics, not the full design spaces.  For LiH, nine of the eighteen graph-based groups have $r_X>6$, so their combined $X$-support cannot be confined to one six-qubit spin-sector block.  This is a routing diagnostic only; it is not a proof that those groups are inaccessible to a pair of orbital rotations.

\subsection{Certified cost comparison}
\label{sec:costcomparison}

The comparison of Sec.~\ref{sec:cost} is evaluated, not estimated.  For each Hamiltonian we build two dictionaries, product (QWC) settings and the Clifford-accessible fully commuting (FC) settings that Theorem~\ref{thm:clifford-diag} makes available, and solve the certified functional \eqref{eq:copt} over each with the companion framework's conic solver \cite{Zahariev2026CCF}, on identical Hamiltonians, an identical state model, and identical declared caps.  The reported quantity is the certified shot saving
\[
1 - \bigl(\Phi_{\rm QWC+FC}/\Phi_{\rm QWC}\bigr)^{2},
\]
each $\Phi$ carrying a machine-checkable dual witness.

\begin{table}[t]
\caption{Certified value of admitting Clifford-accessible (fully commuting) settings, over declared capped dictionaries, on production $f$-element Hamiltonians and on the CH$_4$ calibration instance.  Computed with the companion certificate framework \cite{Zahariev2026CCF}; every value carries a dual witness, gaps at solver tolerance.}
\label{tab:certified_cost}
\begin{ruledtabular}
\begin{tabular}{lrrr}
System & QWC & QWC+FC & saved \\
\hline
CH$_4$ (18q) & 10.831 & 8.911 & 32\% \\
CeO (29q) & 19.009 & 13.099 & \textbf{52.5\%} \\
NdO (31q) & 8.127 & 6.731 & 31.4\% \\
UO$_2^{+}$ (33q) & 22.299 & 12.274 & \textbf{69.7\%} \\
UO$_2^{2+}$ (35q) & 42.302 & 31.516 & 44.5\% \\
\end{tabular}
\end{ruledtabular}
\end{table}

Entangled measurements can reduce estimation cost~\cite{HamamuraImamichi2020}.  Here the companion certificate supplies a machine-checkable magnitude on production operators, while Sec.~\ref{sec:separation} supplies a separate structural witness showing that some one-context Clifford settings have no one-context orbital-rotation replacement.  Admitting the Clifford class saves $31$--$70\%$ of all shots at equal precision on the production operators, and $32\%$ on the small calibration instance.  These are certified statements about the \emph{declared, capped} dictionaries, not about the full stabilizer class, which would require an exact pricing oracle.  Table~\ref{tab:certified_cost} prices the declared enlargement QWC~$\to$~QWC+FC; it does not numerically price Gaussian versus Clifford measurement, and it does not by itself establish that the savings arise from the correlation-rank obstruction.  The correlation-rank theorem separately demonstrates that some Clifford settings cannot be replaced by a single Gaussian context; the two results complement one another.

\section{Algorithmic Outlook: Beyond Greedy Clique Cover}

The graph formulation cleanly separates the conceptual advance from the current heuristic. The minimum clique cover of $G_C(H)$ is equivalent to the chromatic number of the complement graph. Therefore any progress in graph coloring directly improves measurement optimization.

Three directions are especially relevant.
\begin{enumerate}[label=(\alph*),leftmargin=2.2em]
    \item \textbf{DSATUR.} Degree-of-saturation ordering~\cite{Brelaz1979} is an inexpensive, widely used graph-coloring baseline.  On the NH$_3$ benchmark instance discussed in Sec.~\ref{sec:results}, DSATUR reduced the group count by more than half relative to greedy.
    \item \textbf{Tabu search / TabuCol.} TabuCol is a standard local-search heuristic for large graph-coloring instances~\cite{Hertz1987}; on the same instance it achieved a further ${\sim}7\%$ reduction beyond DSATUR.
    \item \textbf{Semidefinite bounds.} The Lov\'asz theta function of the complement graph can quantify how much of the remaining gap is algorithmic rather than structural.
\end{enumerate}

The important point is that any improvement in graph coloring immediately translates into fewer Clifford measurement circuits, while Corollary~\ref{cor:zeroT} keeps their non-Clifford synthesis cost at zero. Thus the algorithmic and fault-tolerant advantages are aligned rather than competing.

This perspective also connects naturally to Pauli grouping as a combinatorial optimization problem~\cite{Gokhale2020,Jena2022}.  The $X$-rank diagnostic can prioritize cliques with broad cross-sector support for further Gaussian-accessibility analysis or direct Clifford treatment.  Because rank alone is not an exclusion test, this prioritization must be followed by a stronger structural or cost criterion.

\section{Global search over the accessible region}
\label{sec:ga_xrank}

High-$X$-rank commuting families are located by global combinatorial search with a discrete mode of the RANGE ABC+GA architecture~\cite{RANGE}, described in the companion certificate paper~\cite{Zahariev2026CCF}.  In the discrete mode a solution is a commuting subset of Hamiltonian terms, commutation is repaired inside every move and crossover (each candidate is a valid measurement context by construction), the constructive step is rank-chasing greedy completion, and the fitness is the exact GF$(2)$ $X$-rank.  This discrete mode is contributed as an extension to the RANGE code base and will be released with the artifact.  To assess search robustness, we compare three independent implementations: the RANGE discrete search, a separately written genetic algorithm, and a multi-start greedy search.  All three return the same best value for every tested system, and every reported witness is checked to be pairwise commuting.  This agreement supports the robustness of the best-found values, but only attainment of a proved ceiling certifies maximality (CH$_4$ at small scale and NdO at production scale); elsewhere the reported deficit is an upper bound on the true deficit.  The continuous search of Sec.~\ref{sec:eckart} uses the same ABC+GA scheme on the $SU(2)$ rotation angles, where RANGE's native continuous mode applies directly.

On CH$_4$ the search reaches the ceiling $r_X=2(N{-}1)=16$ within one generation (48 evaluations), reproducibly across seeds, while the sorted-insertion groups reach at most $r_X=8$.  Because the ceiling is attained, the value 16 is certified maximal; the difference from the heuristic schedule is not attributable to incomplete search.  Applying the same machinery across the benchmark set yields a finding that refines the tightness discussion: \emph{saturation of the two-sector bound is molecule-dependent}: some molecules (like CH$_4$) admit a commuting family reaching the full ceiling $2(N{-}1)$, while others fall short by a small deficit set by the molecule's electronic structure.

\begin{observation}[Sector-decomposition law, empirical]\label{obs:sectorlaw}
For every system in the ten-instance global-search set (Secs.~\ref{sec:ga_xrank} and \ref{sec:felement}; 12--36 qubits), the best-found commuting family saturates both sector projections at $r_X^{\alpha}=r_X^{\beta}=N{-}1$.  Its deficit from $2(N{-}1)$ therefore equals the number of GF(2) dependencies between the two sectors' $X$-supports.  In this dataset, the unresolved deficit lies in cross-sector dependence rather than unused capacity within either sector.
\end{observation}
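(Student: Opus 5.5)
The first claim of Observation~\ref{obs:sectorlaw}, that the best-found family has $r_X^{\alpha}=r_X^{\beta}=N{-}1$ for all ten instances, is empirical. I would support it by recomputation rather than derivation. For each reported witness family $S$, I would form the $X$-block $[X^\alpha\mid X^\beta]$ in the interleaved Jordan--Wigner ordering and compute the three GF$(2)$ ranks $r_X^\alpha(S)$, $r_X^\beta(S)$ and $r_X(S)$ by Gaussian elimination. I would also re-check pairwise commutation through the symplectic form $\omega$.

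Step~2 of Theorem~\ref{thm:two-sector} already gives $r_X^\alpha,r_X^\beta\le N{-}1$. So this part reduces to exhibiting, for each sector, $N{-}1$ rows whose projections are independent, which is a finite certificate. The only real risk is that some instance falls short in a sector. In that case the observation must be weakened for that instance rather than proved.

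The second claim, that the deficit equals the number of cross-sector dependencies, is a rigorous identity once sector saturation holds. Let $A=X^\alpha$ and $B=X^\beta$ be the sector projections of the rows of $S$. The rank of the concatenation obeys
\begin{equation}
\rank[A\mid B]=\rank A+\rank B-\dim\bigl(\operatorname{row}\text{-}\ker A+\operatorname{row}\text{-}\ker B\bigr)^{\perp}\text{-type correction}.
\end{equation}
The cleanest form comes from the map $v\mapsto(vA,vB)$ on coefficient vectors $v\in\F_2^{m}$. Its kernel is $\ker A^\top\cap\ker B^\top$ (left kernels). Hence $\rank[A\mid B]=m-\dim(\ker_L A\cap\ker_L B)$, while $\rank A=m-\dim\ker_L A$ and $\rank B=m-\dim\ker_L B$.

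Combining these,
\begin{equation}
\rank A+\rank B-\rank[A\mid B]=m-\dim(\ker_L A+\ker_L B),
\end{equation}
using $\dim(U+V)=\dim U+\dim V-\dim(U\cap V)$. I would define the number of GF$(2)$ dependencies between the two sectors' $X$-supts as exactly this quantity. Equivalently, it is $\dim(\operatorname{row}A\text{-relations})$: the codimension of $\ker_L A+\ker_L B$, i.e.\ the dimension of the space of linear relations that hold among the $\alpha$-rows but are broken in the $\beta$-rows, or vice versa. A more intuitive equivalent is $\dim(\operatorname{row}\,[A\mid 0]+\operatorname{row}\,[0\mid B])-\dim\operatorname{row}[A\mid B]$.

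Substituting $\rank A=\rank B=N{-}1$ then gives deficit $2(N{-}1)-r_X(S)$ equal to this dependency count. That is the asserted equality, and it also explains the closing sentence. No capacity is unused inside either sector, so any shortfall from the ceiling of Theorem~\ref{thm:two-sector} must come from the coupling term.

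The main obstacle is definitional rather than technical: fixing a precise meaning of "number of dependencies" so that the identity is an equality and not merely an inequality. I would adopt the left-kernel codimension above and verify numerically that it reproduces each tabulated deficit. Those deficits are CH$_4$ and NdO at zero, and small positive values elsewhere.
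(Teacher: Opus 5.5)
Your proposal is correct. On the first sentence it coincides with the paper. There, sector saturation is supported only by the reported witness data (Tables~\ref{tab:two-sector}, \ref{tab:xrank_saturation}, and \ref{tab:felement_xrank}), and Step~2 of Theorem~\ref{thm:two-sector} supplies the matching upper bound $N{-}1$. So a per-sector independence certificate is all that is needed, as you say.

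The routes differ on the second sentence. The paper never characterizes the dependency count intrinsically. It \emph{defines} it in the caption of Table~\ref{tab:felement_xrank} as $\mathrm{deps}=r_X^\alpha+r_X^\beta-r_X^{\rm best}$. Once $r_X^\alpha=r_X^\beta=N{-}1$, the equality with $2(N{-}1)-r_X$ is then pure arithmetic.

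Your identity $\rank A+\rank B-\rank[A\mid B]=m-\dim(\ker_L A+\ker_L B)$ gives the same number a definition-independent meaning. That is what actually licenses the third sentence of the observation. In general,
$2(N{-}1)-r_X=(N{-}1-r_X^\alpha)+(N{-}1-r_X^\beta)+\mathrm{deps}$,
with all three terms nonnegative, and saturation removes the two ``unused capacity'' terms. The paper's route is shorter but makes the ``therefore'' tautological; yours costs one lemma and supplies content.

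One correction concerns your verbal gloss (relations holding among the $\alpha$-rows but broken in the $\beta$-rows, or vice versa). It does not describe this quantity. For instance, $\dim\ker_L A/(\ker_L A\cap\ker_L B)=\rank[A\mid B]-\rank A$, which for CH$_4$ equals $8$, not $0$.

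There are two correct equivalents:
\begin{itemize}
\item $\dim(\operatorname{col}A\cap\operatorname{col}B)$, because $(\ker_L A+\ker_L B)^{\perp}=\operatorname{col}A\cap\operatorname{col}B$ in $\F_2^m$. This also makes your informal first display exact.
\item $\dim\ker[A\mid B]-\dim(\ker A\oplus\ker B)$, the number of independent column dependencies of the full $X$-block not generated by within-sector ones.
\end{itemize}
Adopt the latter: it is the most literal reading of ``GF(2) dependencies between the two sectors' $X$-supports''.
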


\begin{table}[t]
\caption{Best-found high-$X$-rank values $r_X^{\rm best}$ (RANGE discrete-mode ABC+GA; an independent memetic GA and a multi-start seeded greedy search agree exactly in every case).  Only CH$_4$'s value is \emph{certified} maximal, by attaining the algebraic bound; the others are best-found and the deficits are upper bounds.  STO-3G, Jordan--Wigner.}
\label{tab:xrank_saturation}
\begin{ruledtabular}
\begin{tabular}{llccccc}
Molecule & sym. & qubits & $2(N{-}1)$ & greedy & $r_X^{\rm best}$ & def.$^{\le}$ \\
\hline
LiH & $C_{\infty v}$ & 12 & 10 & 5 & 8 & 2 \\
BeH$_2$ & $D_{\infty h}$ & 14 & 12 & 6 & 9 & 3 \\
H$_2$O & $C_{2v}$ & 14 & 12 & 6 & 10 & 2 \\
NH$_3$ & $C_{3v}$ & 16 & 14 & 8 & 13 & 1 \\
CH$_4$ & $T_d$ & 18 & 16 & 8 & \textbf{16} & \textbf{0} \\
N$_2$ & $D_{\infty h}$ & 20 & 18 & 9 & 16 & 2 \\
\end{tabular}
\end{ruledtabular}
\end{table}

Three consequences.  (i) The best-found rank $r_X^{\rm best}$ (certified maximal only where the bound is attained) is a structural quantity in its own right, lying above the practical schedule values and at or below the algebraic ceiling.  (ii) CH$_4$ is confirmed as a genuine saturation witness rather than a generic instance, and Observation~\ref{obs:sectorlaw} explains why: CH$_4$ saturates because its sectors are $X$-independent ($8+8 = 16$).  (iii) The deficit ranges from 0 to 3 with no monotone dependence on point-group order or system size; characterizing which molecular structure controls it is a concrete open problem this diagnostic poses.

\section{The Clifford-accessible structure of \texorpdfstring{$f$}{f}-element Hamiltonians}
\label{sec:felement}

We next test whether the high-$X$-rank structures seen in molecular benchmarks persist at production scale.  We use four heavy-element qubit Hamiltonians, CeO ($4f^15d^1$), NdO ($4f^3$), UO$_2^{+}$ ($5f^1$), and UO$_2^{2+}$ ($5f^0$), prepared by an exascale pipeline (state-averaged CASSCF, frozen-natural-orbital truncation, symmetry-shift norm reduction, $\mathbb{Z}_2$ tapering) (companion resource-estimates paper: Zahariev, Glezakou \emph{et al.}, in preparation).  All bound-referenced values are computed on the \emph{untapered} Jordan--Wigner operators, in whose frame the bound is formulated: $X$-rank is not invariant under the Clifford transformations that tapering applies.  The interleaved spin-orbital ordering of these operators was verified directly (every Pauli term carries even $X$-weight in each spin sector).

\begin{table*}[t]
\caption{Clifford-accessible structure of the production $f$-element Hamiltonians (untapered Jordan--Wigner frame).  ``greedy'' is the maximum $r_X$ over sorted-insertion FC families; $r_X^{\rm best}$ is the best value located by RANGE discrete-mode global search (witness families independently verified pairwise-commuting; two independent invocations agree exactly).  $r_X^{\alpha}, r_X^{\beta}$ are the sector projections and ``deps'' $= r_X^{\alpha}+r_X^{\beta}-r_X^{\rm best}$.  Only NdO is certified maximal, by attaining the bound.}
\label{tab:felement_xrank}
\begin{ruledtabular}
\begin{tabular}{lcccccccc}
System & config. & $N_{\rm orb}$ & qubits & terms & greedy & $r_X^{\rm best}$ & $2(N{-}1)$ & sectors (deps) \\
\hline
CeO & $4f^15d^1$ & 15 & 30 & 33,359 & 14 & 27 & 28 & $14{+}14$ (1) \\
NdO & $4f^3$ & 16 & 32 & 23,930 & 15 & \textbf{30} & \textbf{30} & $15{+}15$ (\textbf{0}) \\
UO$_2^{+}$ & $5f^1$ & 17 & 34 & 17,477 & 15 & 30 & 32 & $16{+}16$ (2) \\
UO$_2^{2+}$ & $5f^0$ & 18 & 36 & 31,400 & 17 & 32 & 34 & $17{+}17$ (2) \\
\end{tabular}
\end{ruledtabular}
\end{table*}

\emph{The tested sorted-insertion schedules do not exceed the one-sector ceiling.}  Across 1{,}459 sorted-insertion families spanning four heavy-element Hamiltonians of 30--36 qubits, \emph{not one} exceeds $N_{\rm orb}-1$: the greedy maxima are 14, 15, 15, and 17 against ceilings 14, 15, 16, and 17.  This reproduces the CH$_4$ pattern at production scale within the four systems tested; it is an empirical property of these schedules, not a universal theorem about greedy grouping.

\emph{The two-sector ceiling is nearly attained in the tested production operators.}  Global search reaches $r_X^{\rm best} = 27$ of a possible 28 on CeO, \emph{saturates} NdO exactly at 30 (in 48 evaluations, the instantaneous-convergence signature of CH$_4$), and reaches 30 of 32 and 32 of 34 on the uranyls.  Thus near-saturation is not confined to the small CH$_4$ witness: the best-found families come within zero to two ranks of the ceiling, and NdO attains it.

\emph{The empirical sector-decomposition pattern extends to all four production operators.}  In every production system, as in all six molecular benchmarks, the best-found family saturates \emph{both} sector projections exactly at $N_{\rm orb}-1$ (14+14, 15+15, 16+16, 17+17), and the shortfall from the two-sector bound is precisely the number of GF(2) dependencies between the sectors' $X$-supports: 1, 0, 2, 2.  Together with the six molecular instances, this gives ten consecutive observations of the same sector-projection pattern.

\emph{The BLISS shift leaves the reported $X$-rank structure invariant.}  A symmetry shift $H' = H - \lambda_N(\hat N - N_t) - \lambda_{S_z}(\hat S_z - S_{z,t})$ modifies only the identity coefficient and the diagonal single-$Z$ coefficients; every operator it touches has \emph{zero} $X$-support, so the multiset of $X$-masks, and therefore every $X$-rank, is invariant under BLISS-type compression.  The production data confirm this exactly (Table~\ref{tab:xrank_stages}): across the FNO and FNO+BLISS stages of CeO the greedy maximum (14), the best-found value (27), the sector decomposition ($14{+}14$), and the dependency count (1) are identical.  The shift can reduce a simulation norm without changing this particular $X$-support diagnostic.  Tapering is a different matter: it applies a Clifford change of frame before deleting a qubit, so the tapered spectrum (29 qubits, 396 families, greedy maximum 13) is reported as a descriptive statistic of the deployed operator, not as a quantity comparable to the $2(N{-}1)$ bound.

\begin{table}[t]
\caption{$X$-rank structure of CeO through the compression stack.  FNO and FNO+BLISS are in the Jordan--Wigner frame and directly comparable; the tapered row is a Clifford-transformed frame and descriptive only ($^{\ast}$not comparable to the $2(N{-}1)$ bound).}
\label{tab:xrank_stages}
\begin{ruledtabular}
\begin{tabular}{lccccc}
Stage & qubits & terms & greedy & mean & $r_X^{\rm best}$ (deps) \\
\hline
FNO & 30 & 33,361 & 14 & 10.90 & 27 (1) \\
FNO+BLISS & 30 & 33,359 & 14 & 11.01 & 27 (1) \\
$+\,\mathbb{Z}_2$ taper$^{\ast}$ & 29 & 29,315 & 13 & 10.50 & n/a \\
\end{tabular}
\end{ruledtabular}
\end{table}

The production data in this section concern $X$-rank, not the correlation-rank distribution that carries the strict Gaussian exclusion.  Establishing that distribution requires an explicit Gaussian context dictionary and remains separate from the present numerical study.  The companion certificate manuscript prices the declared QWC$\to$QWC+FC enlargement in samples; it does not price a production Gaussian dictionary.

\section{Discussion}

\emph{In one sentence:} the orbital-rotation measurement scheme, though chemically natural, provably cannot reach certain correlated observables that a single Clifford measurement handles, and admitting the fully commuting Clifford class into the declared measurement dictionary is certified to be worth 31--70\% of the shot budget on production $f$-element systems.  The rest of this section makes that picture precise.

The picture that emerges is sharper than the one provided by circuit count alone.

First, the relevant structural contrast is Clifford-accessible commuting contexts versus orbital-rotation contexts.  Theorem~\ref{thm:clifford-diag} covers all commuting Pauli families.  Lemma~\ref{lem:xrank-bound} merely fixes the one-sector dimension, while Theorem~\ref{thm:two-sector} supplies the nontrivial even-parity ceiling for the Hamiltonian support.  Neither is a Gaussian-exclusion result.  That role belongs to Theorems~\ref{thm:separation} and \ref{thm:rank}, which exhibit a Bell-diagonal family outside the one-context orbital model.  The reverse single-context separation also holds (Remark~\ref{rem:incomparable}), so the two classes are incomparable and a hybrid dictionary can improve on either alone.

Second, the benefit is not an asymptotic slogan but a computed quantity.  We make no scaling claim: the comparison is the certified functional of Eq.~\eqref{eq:copt}, evaluated on both dictionaries with a dual witness attached, and it says that admitting the fully commuting Clifford class into the declared product-setting dictionary saves $31$--$70\%$ of the shot budget on production $f$-element operators (Table~\ref{tab:certified_cost}), a QWC-versus-QWC+FC comparison over capped dictionaries rather than a pricing of the Gaussian class itself.  A per-shot crossover inequality or an $N^4$ fragment-count asymptotic would mix a per-fragment cost with a fragment count; the exact functional avoids this conflation by construction.

Third, the controlled-Pauli insertion used in Hadamard-test estimation of off-diagonal matrix elements is itself Clifford (App.~\ref{app:cpauli}), so the measurement circuitry of a transition-element estimate carries no non-Clifford cost \emph{per execution}.  We deliberately draw no resource corollary from this: a transition-element estimate still pays its shot count, its state preparations (controlled preparation may cost more than uncontrolled), and one estimate per matrix element, and any total-cost statement must be obtained from Eq.~\eqref{eq:copt} rather than from the measurement circuit alone.

Fourth, the phrase ``zero measurement $T$ gates'' should be interpreted narrowly and precisely. It refers to non-Clifford gate cost in the fault-tolerant compilation model, not to sampling complexity, physical noise, routing overhead, or the number of distinct measurement settings. Those practical costs remain important, and the present result does not eliminate them; it isolates where the specifically non-Clifford difficulty resides.

Fifth, the large-molecule underperformance of the greedy heuristic should be interpreted cautiously. It does not show that Clifford-accessible fragments are unhelpful. It shows that the greedy coloring method does not exploit the full graph structure. As demonstrated in Sec.~\ref{sec:results}, DSATUR and TabuCol reduce the NH$_3$ group count by more than half relative to greedy, confirming that this is a tractable algorithmic bottleneck rather than a conceptual barrier.

Finally, the main open directions are: (i) an orbital-pair incidence criterion that turns $X$-support into a genuine Gaussian-accessibility test; (ii) extension beyond particle-number-preserving Gaussian unitaries; (iii) covariance-aware construction and pricing of a production Gaussian--Clifford union~\cite{Yen2023}; (iv) modern coloring and exact bounds on larger chemistry graphs; and (v) controlled comparisons with classical-shadow estimators~\cite{Huang2020}, whose sampling behavior depends strongly on the observable family and estimator.

\section{Conclusion}

Every commuting Pauli family admits Clifford simultaneous diagonalization, and every controlled-Pauli insertion is Clifford.  Thus these two measurement-specific circuit primitives require no $T$ gates.  This statement does not cover arbitrary measurement channels, sampling repetitions, routing, or state preparation.

For spin-conserving Jordan--Wigner Hamiltonians, $2(N{-}1)$ is a tight parity ceiling on $X$-rank.  The ceiling organizes the Pauli support but does not prove Gaussian inaccessibility.  The strict result is the fixed-sector correlation-rank obstruction: the Bell/Heisenberg witness requires at least three orbital-rotation contexts and one physical Clifford context, with an exact Eckart--Young approximation curve for fewer orbital contexts.

A Gaussian--Clifford union is therefore a natural design space, but its production cost must be computed only after both dictionaries, covariances, and per-shot costs are declared.  The certified QWC$\to$QWC+FC results reported here demonstrate the value of one such Clifford enlargement; they do not replace the still-open production Gaussian-versus-hybrid comparison.

\section*{Data Availability}
The reproducibility artifact will include the rotation-angle tests, commuting-family witnesses, search settings, and scripts used to generate the reported tables.  It will be archived at Zenodo, with the DOI supplied upon publication.

\begin{acknowledgments}
This work was supported by the U.S. Department of Energy, Office of Science, through the Ames National Laboratory under Contract No. DE-AC02-07CH11358. Computational resources were provided by Iowa State University and by the Oak Ridge Leadership Computing Facility (Frontier, allocation CHM238). The authors acknowledge partial support by ORNL's LDRD and VSO programs. This manuscript has been authored by UT-Battelle, LLC, under contract DE-AC05-00OR22725 with the U.S. Department of Energy (DOE). The US government retains and the publisher, by accepting the article for publication, acknowledges that the US government retains a nonexclusive, paid-up, irrevocable, worldwide license to publish or reproduce the published form of this manuscript, or allow others to do so, for US government purposes. DOE will provide public access to these results of federally sponsored research in accordance with the DOE Public Access Plan (https://www.energy.gov/doe-public-access-plan). This research used resources (Director's discretionary allocation CHM238) of the Oak Ridge Leadership Computing Facility at the Oak Ridge National Laboratory, which is supported by the Office of Science of the U.S. Department of Energy under Contract No. DE-AC05-00OR22725. This research used resources of the National Energy Research Scientific Computing Center (NERSC), a Department of Energy Office of Science User Facility, supported by Contract No. DE-AC02-05CH11231 using NERSC award m4621 (ERCAP0036406). The authors thank Ulrike Stege for discussions on graph coloring formulations of Pauli measurement grouping and Artur Izmaylov for foundational work on the CSA framework.
\end{acknowledgments}

\appendix

\section{Controlled-Pauli insertion is Clifford}
\label{app:cpauli}

For completeness we record the enabling lemma used when a Hadamard-test ancilla is present; it is a per-execution statement about the measurement circuitry only, and carries no implication for shot counts or state-preparation costs.

\begin{lemma}[Controlled-Pauli decomposition]
\label{lem:controlled-pauli}
For any $n$-qubit Pauli string $P=\bigotimes_{j=1}^n P_j$ with $P_j\in\{I,X,Y,Z\}$, the controlled-$P$ gate
\begin{equation}
    \mathrm{C}\text{-}P = |0\rangle\langle 0|_a\otimes I^{\otimes n} + |1\rangle\langle 1|_a\otimes P
\end{equation}
is an $(n{+}1)$-qubit Clifford operation, decomposing into at most $k$ CZ gates and $2k$ single-qubit Clifford gates, where $k=|\{j:P_j\neq I\}|$ is the Pauli weight.
\end{lemma}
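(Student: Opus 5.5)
The plan is to factor the controlled-$P$ gate into commuting single-site pieces and then reduce each piece to a controlled-$Z$ by a local Clifford conjugation.

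\emph{Step 1 (factorization).} Because the projectors $|0\rangle\langle0|_a$ and $|1\rangle\langle1|_a$ are orthogonal and sum to the identity, controlling a tensor product equals the product of the controlled factors:
\begin{equation}
\mathrm{C}\text{-}P=\prod_{j:\,P_j\neq I}\mathrm{C}\text{-}P_j .
\end{equation}
Each factor acts on the ancilla and on the single qubit $j$. The factors commute, since they touch distinct target qubits and are all diagonal on the ancilla. To verify the identity, apply both sides to $|b\rangle_a\otimes|\psi\rangle$ for $b\in\{0,1\}$: both give $|b\rangle_a\otimes P^b|\psi\rangle$. Sites with $P_j=I$ contribute the identity and are dropped, so exactly $k$ factors remain.

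\emph{Step 2 (reduction to CZ).} For each non-identity factor I choose a single-qubit Clifford $V_j$ with $V_j Z V_j^\dagger=P_j$. Concretely, take $V=I$ for $Z$, $V=H$ for $X$, and $V=SH$ for $Y$, since $SHZHS^\dagger=SXS^\dagger=Y$. Then
\begin{equation}
\mathrm{C}\text{-}P_j=(I_a\otimes V_j)\,\mathrm{CZ}_{a,j}\,(I_a\otimes V_j^\dagger).
\end{equation}
This holds because conjugation by $V_j$ on the target fixes the $|0\rangle\langle0|_a$ branch and maps $Z_j$ to $P_j$ on the $|1\rangle\langle1|_a$ branch. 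Each factor therefore uses one CZ gate and at most two single-qubit Clifford gates. The phase signs in the relation $V Z V^\dagger=P$ must be checked, because a sign error would produce a controlled $-P_j$, which differs from $\mathrm{C}\text{-}P_j$ by a $Z_a$ on the ancilla. The stated choices give $+P_j$ exactly.

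\emph{Step 3 (counting and Clifford membership).} Multiplying the $k$ factors gives at most $k$ CZ gates and $2k$ single-qubit gates. Each of $H$, $S$ and CZ lies in $\mathcal{C}_{n+1}$ (CZ is $H$-conjugated CNOT), and $\mathcal{C}_{n+1}$ is a group, so $\mathrm{C}\text{-}P$ is Clifford. As a cross-check, one can compute the conjugation action directly:
\begin{equation}
X_a\mapsto X_a\otimes P,\qquad Z_a\mapsto Z_a .
\end{equation}
Target Paulis $Q$ are fixed or acquire a factor $Z_a$, depending on whether they commute or anticommute with $P$. All images are Paulis, which confirms the normalizer property.

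None of these steps is a real obstacle. The only point that needs care is the sign and phase bookkeeping in Step 2, in particular the $Y$ case, where $S$ versus $S^\dagger$ must be chosen correctly.
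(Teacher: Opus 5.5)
Your proposal is correct and takes essentially the paper's route: local Clifford conjugation sending each nontrivial $P_j$ to $Z$, together with factorization of the controlled tensor product into $\mathrm{CZ}(a,j)$ gates. You factor before conjugating, whereas the paper conjugates by $\mathcal{C}=\bigotimes_j C_j$ first. Your $V_j$ is exactly the paper's $C_j^\dagger$ (e.g.\ $SH=(HS^\dagger)^\dagger$ for $Y$), and your direct conjugation-action check ($X_a\mapsto X_a\otimes P$, $Z_a\mapsto Z_a$) is a valid independent confirmation of Clifford membership.
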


\begin{proof}
The proof proceeds in three steps.

\textit{Step~1: Reduction to controlled-$Z^{\otimes n}$.}
For each qubit $j$, there exists a single-qubit Clifford $C_j$ such that $C_j P_j C_j^\dagger = Z_j$ (or $I_j$ if $P_j=I$): $H$ maps $X\mapsto Z$; $HS^\dagger$ maps $Y\mapsto Z$; and $I$ leaves $Z$ unchanged. Define $\mathcal{C}=\bigotimes_j C_j$. Then
\begin{equation}
    \mathrm{C}\text{-}P = (I_a\otimes\mathcal{C}^\dagger)\;\mathrm{C}\text{-}(Z^{\otimes n})\;(I_a\otimes\mathcal{C}).
    \label{eq:conjugation}
\end{equation}
It suffices to show that $\mathrm{C}$-$(Z^{\otimes n})$ is Clifford.

\textit{Step~2: Decomposition into CZ gates.}
We claim
\begin{equation}
    \mathrm{C}\text{-}(Z_1\otimes\cdots\otimes Z_n) = \prod_{j=1}^{n}\mathrm{CZ}(a,j),
    \label{eq:cz-decomp}
\end{equation}
where $\mathrm{CZ}(a,j)=|0\rangle\langle 0|_a\otimes I_j+|1\rangle\langle 1|_a\otimes Z_j$. To verify: the CZ gates share the ancilla but act on distinct system qubits. For $|0\rangle_a$, each gate contributes $I_j$, so the product is $|0\rangle\langle 0|_a\otimes I^{\otimes n}$. For $|1\rangle_a$, each contributes $Z_j$, giving $|1\rangle\langle 1|_a\otimes \bigotimes_j Z_j$. Cross terms vanish since $|0\rangle\langle 0|\cdot|1\rangle\langle 1|=0$.

\textit{Step~3: Clifford membership.}
Combining Eqs.~\eqref{eq:conjugation} and \eqref{eq:cz-decomp}:
\begin{equation}
    \mathrm{C}\text{-}P = (I_a\otimes\mathcal{C}^\dagger)\prod_{j:\,P_j\neq I}\mathrm{CZ}(a,j)\;(I_a\otimes\mathcal{C}).
    \label{eq:full-decomp}
\end{equation}
Each component (single-qubit Cliffords $C_j$, $C_j^\dagger$, and two-qubit CZ gates) is in the Clifford group. A finite product of Clifford gates is Clifford.
\end{proof}

\section{Symplectic elimination details}
\label{app:symplectic}

For completeness, we summarize the constructive content of Theorem~\ref{thm:clifford-diag}. Let the commuting family $S$ have symplectic matrix
\begin{equation}
    M = [X\mid Z] \in \F_2^{m\times 2n}.
\end{equation}
Elementary row operations first change only the generator basis of the same isotropic subspace.  Gate conjugations act separately as symplectic column transformations: $H$ exchanges the $X$ and $Z$ columns of one qubit, $S$ shears its $Z$ column by its $X$ column, and CNOT performs the corresponding paired transformations on control and target columns.  Symplectic elimination alternates these two kinds of operations until the generator matrix has $X=0$.  Recording only the gate-induced column operations yields the physical Clifford circuit; the row operations do not represent gates.

\section{On the \texorpdfstring{$X$}{X}-rank bound}

Lemma~\ref{lem:xrank-bound} is only the column-dimension cap $r_X\leq N$ for one sector.  The substantive restriction is Theorem~\ref{thm:two-sector}: spin-sector number conservation forces every Jordan--Wigner Pauli term to have even $X$-weight in each sector, confining each sector projection to an $(N{-}1)$-dimensional subspace.  Combining the two sectors gives $r_X\leq2(N{-}1)$, attained by the CH$_4$/STO-3G commuting witness.  Because this argument constrains all Hamiltonian Pauli subsets rather than the image of orbital rotations specifically, it must not be used as a Gaussian-exclusion proof.

\end{document}